\documentclass[acmsmall]{acmart}
\AtBeginDocument{%
  }

\usepackage{bm}
\usepackage[framemethod=1]{mdframed}

\usepackage{algorithm}
\usepackage{algorithmic}

\newtheorem{theorem}{Theorem}

\newtheorem{proposition}[theorem]{Proposition}

\newtheorem{definition}{Definition}

\newtheorem*{corollary2_1*}{Corollary 2.1}
\newtheorem*{corollary5_1*}{Corollary 5.1}

\newcounter{example}

\newenvironment{proofsketch}[1][A Proof Sketch]{\begin{trivlist}
\item[\hskip \labelsep {\bfseries #1}]}{\end{trivlist}}

\usepackage{float}
\newfloat{algorithm}{t}{lop}

\usepackage{tabularx}
\usepackage{bm}
\allowdisplaybreaks

\newsavebox{\ieeealgbox}

\usepackage[caption=false]{subfig}

\setcopyright{acmlicensed}
\copyrightyear{2026}
\acmYear{2026}
\acmDOI{XXXXXXX.XXXXXXX}



\begin{document}

\title{Optimal Parallel Scheduling under Concave Speedup Functions}

\author{Chengzhang Li}
\affiliation{%
  \institution{Department of Electrical and Computer Engineering, The Ohio State University}
  \city{Columbus}
  \state{OH}
  \country{USA}
  }

\author{Peizhong Ju}
\affiliation{%
  \institution{Department of Computer Science, University of Kentucky}
  \city{Lexington}
  \state{KY}
  \country{USA}
  }

\author{Atilla Eryilmaz}
\affiliation{%
  \institution{Department of Electrical and Computer Engineering, The Ohio State University}
  \city{Columbus}
  \state{OH}
  \country{USA}
  }

\author{Ness B. Shroff}
\affiliation{%
  \institution{Department of Electrical and Computer Engineering, The Ohio State University}
  \city{Columbus}
  \state{OH}
  \country{USA}
  }

\renewcommand{\shortauthors}{Anonymous authors.}

\begin{abstract}
Efficient scheduling of parallel computation resources across multiple jobs is a fundamental problem in modern cloud/edge computing systems for many AI-based applications. Allocating more resources to a job accelerates its completion, but with diminishing returns. Prior work (heSRPT) solved this problem only for some specific speedup functions with an exponential form, providing a closed-form solution. However, the general case with arbitrary concave speedup functions---which more accurately capture real-world workloads---has remained open.  

In this paper, we solve this open problem by developing optimal scheduling algorithms for parallel jobs under general concave speedup functions. 
We first discover a fundamental and broadly-applicable rule for optimal parallel scheduling, namely the \emph{Consistent Derivative Ratio (CDR) Rule}, which states that the ratio of the derivatives of the speedup functions across active jobs remains constant over time. 
To efficiently compute the optimal allocations that satisfy the CDR Rule, we propose the \emph{General Water-Filling (GWF)} method, a more general version of classical water-filling in wireless communications.
Combining these insights, we design the \emph{SmartFill} Algorithm to solve the general scheduling problem. Unlike heSRPT, which always allocates resources to all active jobs, SmartFill selectively determines which jobs should receive resources and how much they should be allocated. For a broad class of so-called \emph{regular} speedup functions, SmartFill yields closed-form optimal solutions, while for non-regular functions it efficiently computes the optimum with low complexity. Numerical evaluations show that SmartFill can substantially outperform heSRPT  across a wide range of concave speedup functions. 
\end{abstract}

\maketitle

\section{Introduction}
\label{sec:introduction}

Parallel computing has become indispensable in modern computing systems~\cite{navarro2014survey}.  
At the cloud scale, data centers such as Microsoft Azure~\cite{collier2015microsoft}, Amazon Web Services (AWS)~\cite{mathew2014overview}, and Google Cloud~\cite{bisong2019overview} operate with tens of thousands of GPUs per site.  
Even at the edge, powerful devices such as the NVIDIA A100 GPU provide up to 6,912 CUDA cores that can run in parallel~\cite{choquette2021nvidia}.  
Modern AI workloads are explicitly designed to exploit this massive parallelism. For example, training deep neural networks such as CNNs~\cite{li2021survey}, Transformers~\cite{vaswani2017attention}, and large language models~\cite{naveed2025comprehensive} leverages parallel GPUs to significantly accelerate training~\cite{strigl2010performance,hu2022survey}, while GPU-based inference of AI models reduces latency in many applications~\cite{wang2019unified,sheng2023flexgen}.

When multiple jobs compete for parallel-computing resources, scheduling becomes a critical task~\cite{ye2024deep}.  
A common objective is to minimize job completion times or related performance metrics~\cite{avrahami2003minimizing}.  
The efficiency of any scheduling policy depends heavily on the \emph{speedup function}, which describes how a job’s processing speed increases as more parallel resources are allocated.  
If the speedup function is linear, the classic SRPT algorithm~\cite{schrage1966queue} is optimal.  
In practice, however, speedup functions are typically concave, reflecting diminishing returns as additional resources are assigned~\cite{ghanbarian2024optimal,harchol2021open}.  
This makes the scheduling problem substantially more challenging.

Berg \emph{et al.}~\cite{berg2020hesrpt} studied this problem under a special family of speedup functions $s(\theta)=\theta^p$ for $0<p<1$.  
They derived elegant theoretical results and a closed-form optimal scheduling policy in this setting.  
However, this family fails to capture the general behavior of concave functions.  
A fundamental limitation is that $s'(0)=\infty$, implying that allocating even an infinitesimal amount of resource to a job yields infinite benefit.  
Consequently, under the optimal schedule in this setting, every job must receive a positive allocation until completion.  
This property is unrealistic in real-world systems, where allocating a small amount of resource provides limited benefit.  
Harchol-Balter~\cite{harchol2021open} identified the problem of scheduling under general concave speedup functions as ``an entirely open problem.''

In this paper, we solve this open problem by developing optimal scheduling algorithms for parallel jobs with \emph{general concave speedup functions}.  
Our system model and objective are identical to those in~\cite{berg2020hesrpt}, except that we allow arbitrary concave speedup functions instead of restricting to $s(\theta)=\theta^p$.  
This generalization substantially broadens applicability to diverse workloads and computing platforms.

We make two key theoretical contributions.  
First, we establish the \emph{Consistent Derivative Ratio (CDR) Rule}, which states that under any optimal policy, the ratio of the derivatives of the speedup functions across active jobs remains constant over time.  
This property dramatically reduces the search space, since any optimal schedule must satisfy it.  

Second, we propose the \emph{General Water-Filling (GWF)} algorithm, which efficiently computes allocations that satisfy the CDR Rule.  
GWF generalizes the classical water-filling algorithm in wireless communications~\cite{tse2005fundamentals}: while the classical setting assumes identical channel widths, GWF accommodates ``bottles'' of different widths and shapes.  
For a broad family of \emph{regular speedup functions}, including $s(\theta)=\theta^p$, GWF admits closed-form solutions.  
For non-regular concave functions, it produces efficient numerical solutions.

Building on these results, we design the \emph{SmartFill} algorithm, which integrates the CDR Rule and GWF into a complete solution.  
Unlike heSRPT, which allocates some resource to every active job, SmartFill intelligently determines which jobs should receive resources and how much should be allocated.  
For regular speedup functions, SmartFill yields closed-form optimal schedules; for general concave functions, it provides efficient numerical solutions.  
Through numerical evaluation, we show that SmartFill consistently outperforms heSRPT  across a wide range of speedup functions.  

Finally, we discuss the broader applicability of the results in this paper. 
The CDR Rule is in fact a very general principle in parallel scheduling.  
It continues to hold under more general settings, including scenarios with heterogeneous speedup functions across jobs and time-varying total system resources, thereby significantly reducing the search space of possible schedules.  
However, the GWF and SmartFill algorithms do not directly extend to these settings, as it remains unclear how to determine the job completion order in such general cases.  
We identify this as an important open problem for future research.

\section{System Model and Problem Statement}\label{sec:model}
We consider the \emph{exact same} system model as in \cite{berg2020hesrpt}, with the only difference being the assumption on the speedup function. 
Specifically, we assume the speedup function is a general concave function, rather than an exponential one. 
While some notations used here differ slightly from those in \cite{berg2020hesrpt}, they are mathematically equivalent.

We consider a system with $M$ parallelizable jobs, each available at time $t=0$. Let $x_i$ denote the size of job $i$ for $i = 1, 2, \ldots, M$.
Without loss of generality, we assume that job sizes follow a non-increasing order:
\begin{equation*}
    x_1 \geq x_2 \geq \cdots \geq x_M.
\end{equation*}

The system includes a divisible server that allocates computation resource among the jobs at any time $t > 0$. Let $B$ denote the total  bandwidth of the server, and let $\theta_i(t)$ denote the amount of resource allocated to job $i$ at time $t$. The allocations must satisfy the constraint:
\begin{equation}\label{eq:cst1}
    \sum_{i=1}^M \theta_i(t) \leq B, \quad \forall t > 0.
\end{equation}
We assume that $\theta_i(t)$ is \emph{right-continuous} w.r.t. $t$. This assumption rules out discontinuities at isolated points\footnote{Alternatively, assuming left-continuity yields the same theoretical results throughout the paper.}.

We assume all jobs follow the same speedup function $s(\theta)$ defined in $\theta\in[0,B]$, which denotes the service rate for a job when allocated with bandwidth $\theta$. 
Let $s'(\theta)$ denote the derivative of $s(\theta)$.
We assume that the speedup function satisfies:
\begin{itemize}
    \item $s(0) = 0$,
    \item $s(\theta)$ is strictly increasing w.r.t. $\theta$,
    \item $s(\theta)$ is continuous and differentiable w.r.t. $\theta$,
    \item $s(\theta)$ is strictly concave w.r.t. $\theta$,
    \item $s'(\theta)$ is continuous  w.r.t. $\theta$.
\end{itemize}
Since $s(\theta)$ is strictly increasing and  strictly concave, we have $s'(\theta)>0$ and $s'(\theta)$  is strictly decreasing for $\theta\in[0,B]$.

Note that in \cite{berg2020hesrpt}, the authors considered a special family of speedup functions, $s(\theta)=\theta^p$, for $0<p<1$.
While this family is simple to analyze, it fails to capture the behavior of most concave functions. 
A key limitation is that these functions all satisfy $s'(0)=\infty$. This property is unrealistic in many real-world applications, where allocating a small amount of resource to a job provides only limited benefit rather than an infinite one. Moreover, this property fundamentally alters the structure of the optimal scheduler, as we will further discuss in Section \ref{sec:GWF}.

Let $Q_i(t_1,t_2)$ denote the amount completed service for job $i$ between time period $[t_1,t_2]$.
We have 
\begin{equation}\label{eq:service_t1_t2}
    Q_i(t_1,t_2)=\int_{t_1}^{t_2} s(\theta_i(t))\, dt, \quad \text{for } i = 1, 2, \cdots, M \text{ and } \forall t_1<t_2.
\end{equation}

Let $T_i$ denote the completion time of job $i$. 
For each job $i = 1, 2, \cdots, M$, we have:
\begin{equation}\label{eq:cst2}
    \theta_i(t) = 0, \quad \text{for } t > T_i.
\end{equation}
and 
\begin{equation}\label{eq:cst3}
    Q_i(0,T_i) = x_i, \quad \text{for } i = 1, 2, \cdots, M.
\end{equation}

Let $J$ denote the system performance metric, which is a weighted sum of job completion times:
\begin{equation}\label{eq1:obj}
    J = \sum_{i=1}^M w_i T_i.
\end{equation}
Here $w_i$ is the weight for job $i$.
Following the setting in \cite{berg2020hesrpt}, we assume that the weights follow a non-decreasing order:
\begin{equation*}
    w_1 \leq w_2 \leq \cdots \leq w_M.
\end{equation*}
This ordering implies that shorter jobs are given higher priority through larger weights. The choice of ${w_i}$ depends on the performance metric of interest. For example, when $w_1=w_2=\cdots=w_M=1$, $J$ corresponds to the mean flow completion time \cite{avrahami2003minimizing}. Alternatively, when $w_i=\frac{1}{x_i}$, $J$ represents the mean slowdown of the system \cite{harchol2002asymptotic}.

The objective of this paper is to find the optimal allocation $\theta_i(t)$ for each job $i$ and time $t$, so as to minimize $J$, given  $B$, $x_i$, $s(\theta)$, and $w_i$'s. Formally, we aim to solve the following optimization problem:
\begin{equation*}
\begin{aligned}
\text{OPT:} \quad \min_{\theta_i(t)} \quad & J  = \sum_{i=1}^M w_i T_i, \\
\text{subject to} \quad & \text{Constraints } \eqref{eq:cst1}, \eqref{eq:service_t1_t2},\eqref{eq:cst2}, \eqref{eq:cst3}.
\end{aligned}
\end{equation*}

The remainder of this paper is devoted to solving OPT. In Section~\ref{sec:CDR}, we present our first key result: a structural property of the optimal scheduling policy, the \emph{Consistent Derivative Ratio (CDR) Rule}, which significantly reduces the search space of OPT. Section~\ref{sec:GWF} introduces our second key result: the \emph{General Water-Filling (GWF)} algorithm, which efficiently computes allocations based on the CDR Rule. Finally, in Section~\ref{sec:SmartFill}, we combine these key results to develop the \emph{SmartFill} algorithm, which provides a complete solution to OPT.

\section{Consistent Derivative Ratio Rule}\label{sec:CDR}
In this section, we present the main result of this paper---the \emph{Constant Derivative Ratio (CDR) Rule}.

Let $\theta_i^*(t)$ denote an optimal scheduling policy to OPT.
The following theorem establishes that, under the optimal solution, the ratio of the derivatives of the speedup functions between any two jobs remains constant over time, as long as both jobs have a positive service rate.

\begin{theorem}\label{theorem:CDR}
    (\textbf{CDR Rule}): For any pair of jobs $i$ and $j$, there exists a constant $c_{i,j}$ such that
    \begin{equation}
        \frac{s'(\theta_i^*(t))}{s'(\theta_j^*(t))} = c_{i,j},
    \end{equation}
    for all $t$ such that $\theta_i^*(t) > 0$ and $\theta_j^*(t) > 0$.
\end{theorem}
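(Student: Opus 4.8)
The plan is to argue by contradiction with a local exchange (variational) argument. Suppose $\theta^*$ is optimal but, for some pair $i,j$, the ratio $\rho(t) := s'(\theta_i^*(t))/s'(\theta_j^*(t))$ is not constant on $S := \{t : \theta_i^*(t) > 0,\ \theta_j^*(t) > 0\}$. Then there exist $t_1, t_2 \in S$ with $\rho(t_1) \neq \rho(t_2)$; using the right-continuity of $\theta^*$ and the continuity of $s'$, I pass to two disjoint intervals $I_1 \ni t_1$ and $I_2 \ni t_2$ of equal length, lying strictly before $\min(T_i, T_j)$, on which both jobs stay active and the corresponding derivative ratios stay bounded away from each other. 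The aim is to perturb only the split of bandwidth between jobs $i$ and $j$ on $I_1 \cup I_2$ so as to hold job $i$'s total service fixed while strictly increasing job $j$'s, which drives $T_j$ strictly down and contradicts optimality.

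I then construct the perturbation explicitly: set $\theta_i \to \theta_i^* + a\,\mathbf{1}_{I_1} + b\,\mathbf{1}_{I_2}$ and $\theta_j \to \theta_j^* - a\,\mathbf{1}_{I_1} - b\,\mathbf{1}_{I_2}$. Because the pair's sum is preserved on each interval, $\sum_k \theta_k$ is unchanged, so constraint \eqref{eq:cst1} still holds and all other jobs are untouched. Requiring that job $i$'s total service over $I_1 \cup I_2$ be unchanged defines $b$ as a smooth function of $a$ near $0$ (implicit function theorem, using $s' > 0$), with $b'(0)$ equal to minus the ratio of the interval-averaged values of $s'(\theta_i^*)$ on $I_1$ and $I_2$. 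Differentiating job $j$'s total service $g(a)$ at $a = 0$ yields a closed-form expression whose vanishing is equivalent to the two averaged ratios of $s'(\theta_i^*)$ to $s'(\theta_j^*)$ being equal on $I_1$ and $I_2$; since by construction they differ, $g'(0) \neq 0$, and for a small $a$ of the appropriate sign one gets $g(a) > g(0)$, i.e.\ job $j$ receives strictly more total service while job $i$ receives exactly the same.

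Finally I translate this into a strict decrease of $J$. Job $i$'s cumulative-service trajectory rejoins its original value after $\sup I_2$, so $T_i$ and every other completion time are unchanged; job $j$'s cumulative trajectory is shifted up by a positive constant beyond $\sup I_2$, so it reaches $x_j$ at some $T_j^{\mathrm{new}} < T_j^{\mathrm{old}}$, and setting $\theta_j = 0$ afterwards keeps the schedule feasible (constraints \eqref{eq:cst2} and \eqref{eq:cst3} hold with the new completion time). Hence $J$ drops by $w_j(T_j^{\mathrm{old}} - T_j^{\mathrm{new}}) > 0$, contradicting optimality. (If $w_j = 0$ one swaps the roles of $i$ and $j$; positivity of the weights in the intended metrics makes this a non-issue.) Therefore $\rho$ is constant on $S$, which furnishes the constant $c_{i,j}$.

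\textbf{Main obstacle.} The principal difficulty is rigor under the weak regularity of $\theta^*$: since it is only assumed right-continuous, I cannot evaluate $s'(\theta^*)$ naively ``at instants'' and must phrase the exchange through interval-averaged derivatives, then argue that non-constancy of $\rho$ genuinely produces two intervals whose averaged ratios differ. The cleanest route is to first establish that $\rho$ is constant almost everywhere on $S$ (via Lebesgue-point/density reasoning on the averaged quantities) and then upgrade to the stated pointwise claim using right-continuity. Checking that the small perturbation keeps all allocations within $[0,B]$ and that $T_j$ decreases strictly---rather than merely to first order---also requires the implicit-function and continuity bookkeeping, but these are routine once the interval setup is fixed.
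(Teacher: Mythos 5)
Your proposal is correct and takes essentially the same route as the paper: a contradiction via a two-interval bandwidth exchange between jobs $i$ and $j$ at two times where the derivative ratio differs, leaving all other jobs untouched and strictly reducing $T_j$. The only substantive difference is bookkeeping---you hold job $i$'s total service \emph{exactly} fixed via the implicit function theorem, whereas the paper chooses the exchange amounts $\Delta_1=s'(\theta_i^*(t_2))\delta$, $\Delta_2=s'(\theta_i^*(t_1))\delta$ so that job $i$'s service changes only by $o(\delta^3)$ while job $j$'s gains $h_{i,j}\delta^2$; your version is a slightly more rigorous rendering of the same argument (it also cleanly addresses the interval-averaging and feasibility points the paper glosses over).
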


\begin{figure}
  \centering
  \includegraphics[width=0.95\textwidth]{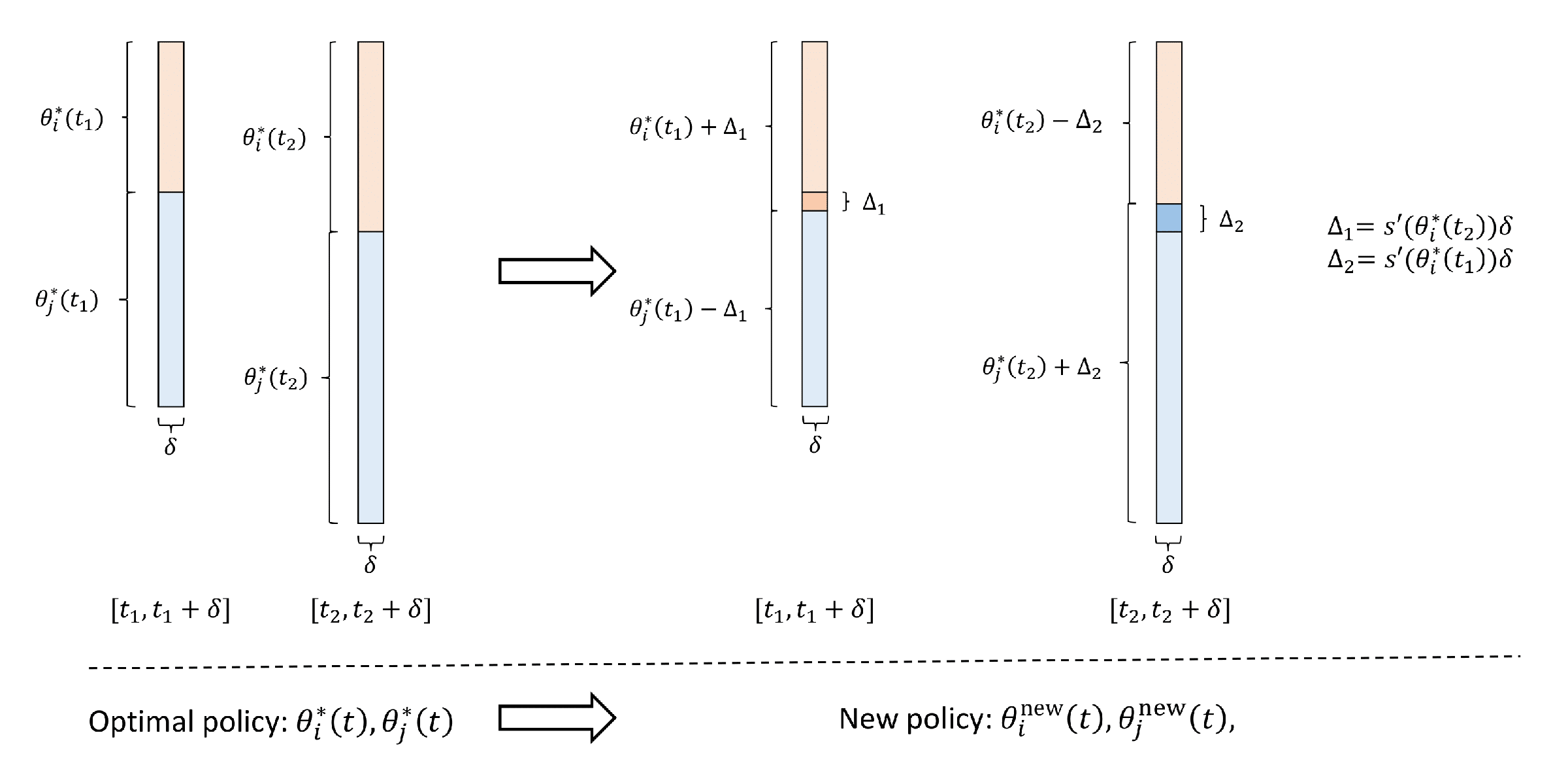}
  \caption{Idea of the proof of Theorem \ref{theorem:CDR}.}
  \label{fig:proof1} 
\end{figure}

\begin{proof}
We prove Theorem~\ref{theorem:CDR} by contradiction.

Assume the theorem is false. Then, there exist two time instances $t_1$ and $t_2$ such that
\begin{equation*}
    \frac{s'(\theta_i^*(t_1))}{s'(\theta_j^*(t_1))} > \frac{s'(\theta_i^*(t_2))}{s'(\theta_j^*(t_2))},
\end{equation*}
with $\theta_i^*(t_1) > 0$, $\theta_j^*(t_1) > 0$, $\theta_i^*(t_2) > 0$, and $\theta_j^*(t_2) > 0$.

Define
\begin{equation*}
    h_{i,j} = s'(\theta_i^*(t_1))s'(\theta_j^*(t_2)) - s'(\theta_i^*(t_2))s'(\theta_j^*(t_1)).
\end{equation*}
By the assumption, we have $h_{i,j} > 0$.

The idea of the proof is shown in Fig.~\ref{fig:proof1}.
We will construct a new scheduling solution $\theta_i^\text{new}(t)$ and $\theta_j^\text{new}(t)$ that decreases the objective value, while keeping all other scheduling decisions unchanged.


We construct $\theta_i^\text{new}(t)$ and $\theta_j^\text{new}(t)$ as follows. In the interval $[t_1, t_1 + \delta]$, we increase the service rate for job $i$ by a small amount $\Delta_1 = s'(\theta_i^*(t_2))\delta$ and decrease the service rate for job $j$ by the same amount:
\begin{align*}
    \theta_i^\text{new}(t) &= \theta_i^*(t) + \Delta_1, \    \theta_j^\text{new}(t) = \theta_j^*(t) - \Delta_1, \quad \text{for } t \in [t_1, t_1 + \delta].
\end{align*}

Let $Q^*(\cdot)$ and $Q^\text{new}(\cdot)$ denote the amount of service under the original and modified schedules, respectively. 
Consider the amount of service for job $i$ in time interval $[t_1, t_1 + \delta]$.
We have
\begin{align*}
    &Q_i^\text{new}(t_1, t_1 + \delta) - Q_i^*(t_1, t_1 + \delta)=\int_{t_1}^{t_1+\delta}\big(s(\theta_i^\text{new}(t))-s(\theta_i^*(t))\big)dt\\
    =&\int_{t_1}^{t_1+\delta}\big(s(\theta_i^*(t)+\Delta_1)-s(\theta_i^*(t))\big)dt
     =\int_{t_1}^{t_1+\delta}\big(s'(\theta_i^*(t))\Delta_1+o(\Delta_1^2)\big)dt \\
    = & s'(\theta_i^*(t))\Delta_1\delta+o(\delta^3) =s'(\theta_i^*(t_1))s'(\theta_i^*(t_2))\delta^2 + o(\delta^3)
\end{align*}
Then we have:
\begin{equation*}
    Q_i^\text{new}(t_1, t_1 + \delta) = Q_i^*(t_1, t_1 + \delta) + s'(\theta_i^*(t_1))s'(\theta_i^*(t_2))\delta^2 + o(\delta^3).
\end{equation*}
For job $j$ in time interval $[t_1, t_1 + \delta]$, similarly, we have
\begin{equation*}
     Q_j^\text{new}(t_1, t_1 + \delta) = Q_j^*(t_1, t_1 + \delta) - s'(\theta_j^*(t_1))s'(\theta_i^*(t_2))\delta^2 + o(\delta^3).
\end{equation*}

In the interval $[t_2, t_2 + \delta]$, we decrease the service rate for job $i$ by $\Delta_2 = s'(\theta_i^*(t_1))\delta$ and increase the service rate for job $j$ by the same amount:
\begin{align*}
    \theta_i^\text{new}(t) &= \theta_i^*(t) - \Delta_2, \ \theta_j^\text{new}(t) = \theta_j^*(t) + \Delta_2, \quad \text{for } t \in [t_2, t_2 + \delta].
\end{align*}
Then for job $i$ in time interval $[t_2, t_2 + \delta]$, we have
\begin{equation*}
    Q_i^\text{new}(t_2, t_2 + \delta) = Q_i^*(t_2, t_2 + \delta) - s'(\theta_i^*(t_2),)s'(\theta_i^*(t_1))\delta^2 + o(\delta^3)
\end{equation*}
For job $j$ in time interval $[t_2, t_2 + \delta]$, we have
\begin{equation*}
    Q_j^\text{new}(t_2, t_2 + \delta) = Q_j^*(t_2, t_2 + \delta) + s'(\theta_j^*(t_2))s'(\theta_i^*(t_1))\delta^2 + o(\delta^3).
\end{equation*}

Combining both intervals, we have:
\begin{align*}
    &Q_i^\text{new}(t_1, t_1 + \delta) + Q_i^\text{new}(t_2, t_2 + \delta) - Q_i^*(t_1, t_1 + \delta) - Q_i^*(t_2, t_2 + \delta) = o(\delta^3), \\
    &Q_j^\text{new}(t_1, t_1 + \delta) + Q_j^\text{new}(t_2, t_2 + \delta) - Q_j^*(t_1, t_1 + \delta) - Q_j^*(t_2, t_2 + \delta) = h_{i,j} \delta^2 + o(\delta^3).
\end{align*}

Therefore, the total service to job $i$ changes by $o(\delta^3)$, while the service to job $j$ increases by $h_{i,j} \delta^2 + o(\delta^3)$. 
Recall that $h_{i,j}>0$.
This implies that the completion time $T_i$ changes by $o(\delta^3)$, and $T_j$ decreases by a factor on the order of $\delta^2$.
That is, for a small $\delta$, the new scheduling policy $\theta_i^\text{new}(t), \theta_j^\text{new}(t)$ yields a strictly smaller objective value than the optimal $(\theta_i^*(t), \theta_j^*(t))$, contradicting to the assumption of optimality.
This completes the proof.
\end{proof}

The CDR Rule reveals a fundamental structural property of the optimal scheduling solution: the marginal benefit of allocating additional bandwidth---measured by the derivative of the speedup function---must maintain a constant ratio across any pair of active jobs over time. 
This condition ensures that no further local reallocation of resources between jobs can improve the system performance. 
The contradiction-based proof illustrates this by constructing a small bandwidth exchange between two jobs at two distinct time intervals. 
If the derivative ratio were not constant, such a reallocation would strictly reduce the objective function, violating optimality. 

Note that for the problem studied in \cite{berg2020hesrpt}, where the speedup function is $s(\theta)=\theta^p$ ($0<p<1$), the authors established a ``scale-free'' property (Theorem 3 in \cite{berg2020hesrpt}).
That is, the ratio of service rates, $\frac{\theta_i^*(t)}{\theta_j^*(t)}$ between any two active jobs $i$ and $j$ remains constant over time $t$. 
This property can be viewed as a special case of the CDR Rule: for speedup functions $s(\theta)=\theta^p$, a constant ratio of $\frac{\theta_i^*(t)}{\theta_j^*(t)}$ is equivalent to a constant ratio of $ \frac{s'(\theta_i^*(t))}{s'(\theta_j^*(t))}$.

Theorem~\ref{theorem:CDR} characterizes the case where both jobs receive positive allocations, i.e., $\theta_i^*(t) > 0$ and $\theta_j^*(t) > 0$.
We now extend the analysis to the case where one job is allocated zero resource.
Let $T_i^*$ denote the completion time of job $i$ under the optimal schedule $\theta_i^*(t)$.
For any pair of jobs $i$ and $j$, the condition $t < \min\{T_i^*, T_j^*\}$ ensures that both jobs remain incomplete under the optimal scheduler at time $t$.
The following theorem establishes the structural property for time instances where $\theta_i^*(t) > 0$ and $\theta_j^*(t) = 0$.

\begin{theorem}\label{theorem:2}
    For any pair of jobs $i,j$ and two time instances $t_1,t_2$ with $\max\{t_1,t_2\}<\min\{T_1^*,T_2^*\}$, if  $\theta_i^*(t_1) > 0$, $\theta_j^*(t_1) > 0$, $\theta_i^*(t_2) > 0$, $\theta_j^*(t_2) = 0$, then we have
    \begin{equation}
        c_{i,j}=\frac{s'(\theta_i^*(t_1))}{s'(\theta_j^*(t_1))} \leq \frac{s'(\theta_i^*(t_2))}{s'(0)},
    \end{equation}
    where $c_{i,j}$ is the constant given by Theorem~\ref{theorem:CDR}.
\end{theorem}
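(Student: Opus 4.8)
The plan is to reuse, almost verbatim, the contradiction-and-perturbation machinery of Theorem~\ref{theorem:CDR}, observing that setting $\theta_j^*(t_2)=0$ simply replaces $s'(\theta_j^*(t_2))$ by $s'(0)$ in the quantity that controlled the sign of the improvement there. First I would suppose, for contradiction, that the claim fails, i.e. $c_{i,j}=\frac{s'(\theta_i^*(t_1))}{s'(\theta_j^*(t_1))}>\frac{s'(\theta_i^*(t_2))}{s'(0)}$. Since every derivative is positive, cross-multiplying gives $h_{i,j}:=s'(\theta_i^*(t_1))\,s'(0)-s'(\theta_i^*(t_2))\,s'(\theta_j^*(t_1))>0$, which is exactly the $h_{i,j}$ of Theorem~\ref{theorem:CDR} with $s'(\theta_j^*(t_2))$ instantiated at $0$.

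Next I would apply the identical local reallocation: on $[t_1,t_1+\delta]$ raise job $i$ by $\Delta_1=s'(\theta_i^*(t_2))\delta$ and lower job $j$ by $\Delta_1$; on $[t_2,t_2+\delta]$ lower job $i$ by $\Delta_2=s'(\theta_i^*(t_1))\delta$ and raise job $j$ by $\Delta_2$. The service accounting for job $i$ on both intervals, and for job $j$ on $[t_1,t_1+\delta]$, is unchanged from Theorem~\ref{theorem:CDR}. The only new computation is the gain for job $j$ on $[t_2,t_2+\delta]$, where the baseline rate is $\theta_j^*(t_2)=0$: invoking the right-continuity of $\theta_j^*$ and the continuity of $s'$, the marginal rate on this shrinking interval is $s'(0)+o(1)$, so $Q_j^\text{new}(t_2,t_2+\delta)-Q_j^*(t_2,t_2+\delta)=s'(0)\,s'(\theta_i^*(t_1))\,\delta^2+o(\delta^2)$. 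Summing the four contributions, the total service to job $i$ changes by $o(\delta^2)$ while that to job $j$ changes by $h_{i,j}\delta^2+o(\delta^2)>0$. Because the stated hypothesis $\max\{t_1,t_2\}<\min\{T_i^*,T_j^*\}$ guarantees job $j$ is genuinely unfinished at $t_2$, this surplus advances $T_j^*$ by order $\delta^2$ while perturbing $T_i^*$ only by $o(\delta^2)$, so $J=\sum_k w_k T_k$ strictly decreases, contradicting optimality.

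Along the way I would check feasibility, which is where the hypothesis earns its keep. The bandwidth constraint~\eqref{eq:cst1} is preserved on each interval because equal amounts are traded between the two jobs; the decreases (job $j$ at $t_1$, job $i$ at $t_2$) are from strictly positive baselines and hence admissible for small $\delta$; and the increase of job $j$ at $t_2$ raises it from $0$ to $\Delta_2>0$, which is always admissible. The condition $\max\{t_1,t_2\}<\min\{T_i^*,T_j^*\}$ ensures $\theta_j^*(t_2)=0$ reflects a deliberate non-allocation to a still-active job, rather than a trivial consequence of completion, so that extra service at $t_2$ truly lowers $T_j^*$.

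The main obstacle is conceptual rather than computational: unlike Theorem~\ref{theorem:CDR}, the perturbation here is necessarily one-sided. Since $\theta_j^*(t_2)=0$ sits on the boundary of the feasible set, I may increase job $j$ at $t_2$ but cannot push it below zero, so only the single reallocation above is available. This asymmetry is precisely why the conclusion is an inequality rather than an equality: the assumption $c_{i,j}>s'(\theta_i^*(t_2))/s'(0)$ produces an \emph{admissible} improving perturbation and hence a contradiction, whereas the reverse assumption would demand decreasing $\theta_j^*(t_2)$ below zero, which is infeasible and yields no contradiction. The step I would treat most carefully is the boundary bookkeeping at $t_2$---justifying the $s'(0)+o(1)$ marginal rate uniformly on $[t_2,t_2+\delta]$ from right-continuity and the continuity of $s'$, and confirming that the accumulated error is genuinely $o(\delta^2)$ so that the positive $h_{i,j}\delta^2$ term dominates.
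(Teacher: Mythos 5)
Your proposal is correct and follows essentially the same route as the paper, which proves Theorem~\ref{theorem:2} by invoking the identical two-interval perturbation from Theorem~\ref{theorem:CDR} with $s'(\theta_j^*(t_2))$ replaced by $s'(0)$. In fact you supply more detail than the paper does, in particular the boundary bookkeeping at $\theta_j^*(t_2)=0$ and the explanation of why the one-sidedness of the admissible perturbation yields an inequality rather than an equality.
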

\begin{proof}
We prove Theorem~\ref{theorem:2} by contradiction.
Assume the theorem is false. Then we have
\begin{equation*}
    \frac{s'(\theta_i^*(t_1))}{s'(\theta_j^*(t_1))} > \frac{s'(\theta_i^*(t_2))}{s'(\theta_j^*(t_2))}.
\end{equation*}
Following the same approach in the proof of Theorem \ref{theorem:CDR}, in the interval $[t_1, t_1 + \delta]$, we increase the service rate for job $i$ by a small amount $\Delta_1 = s'(\theta_i^*(t_2))\delta$ and decrease the service rate for job $j$ by the same amount;
in the interval $[t_2, t_2 + \delta]$, we decrease the service rate for job $i$ by $\Delta_2 = s'(\theta_i^*(t_1))\delta$ and increase the service rate for job $j$ by the same amount.
We can construct a new scheduling solution $\theta_i^\text{new}(t)$ and $\theta_j^\text{new}(t)$ that decreases the objective value, which leads to a contradiction.
\end{proof}

Theorem~\ref{theorem:2} states that for any time $t$ with $t < \min\{T_i^*, T_j^*\}$, if $\theta_i^*(t) > 0$ and $\theta_j^*(t) = 0$, then  
$   \frac{s'(\theta_i^*(t))}{s'(0)} \geq c_{i,j}$.
This result can be interpreted as follows. The CDR Rule (Theorem~\ref{theorem:CDR}) requires the derivative ratio 
$   \frac{s'(\theta_i^*(t))}{s'(\theta_j^*(t))}$
to equal $c_{i,j}$ whenever both $\theta_i^*(t)$ and $\theta_j^*(t)$ are positive. Since $s'(\theta) > 0$ and $s'(\theta)$ is decreasing with $\theta$, consider a time $t$ such that $s'(\theta_i^*(t)) > c_{i,j} s'(0)$. To satisfy the CDR Rule, we would need
$s'(\theta_j^*(t)) = \frac{s'(\theta_i^*(t))}{c_{i,j}} > s'(0).$
However, because $s'(\theta)$ decreases with $\theta$, this would imply $\theta_j^*(t) < 0$, which is impossible. Thus, the non-negativity constraint enforces $\theta_j^*(t) = 0$, which leads to the condition in Theorem~\ref{theorem:2}.

Note that in \cite{berg2020hesrpt}, where $s(\theta)=\theta^p$ with $0<p<1$, we always have $\theta_i^*(t) > 0$ for any $t < T_i^*$. This occurs because $s'(0)=\infty$, implying that the benefit of allocating a small amount of resource to any unfinished job with zero scheduling rate is infinite. As a result, under the optimal scheduling, every job must receive a positive allocation until its completion.  
In contrast, in the general case considered in this paper, where $s'(0)<\infty$, it is possible (and in fact quite common) for some unfinished jobs to have zero scheduling rate. This behavior is precisely the case characterized in Theorem~\ref{theorem:2}. We will revisit this phenomenon with examples in the next section.

Theorems~\ref{theorem:CDR} and \ref{theorem:2} leads to the following corollary.

\begin{corollary2_1*}
There exists $M$ constants $c_1,c_2,\cdots,c_M$ such that
    \begin{equation}
        \frac{s'(\theta_i^*(t))}{s'(\theta_j^*(t))} = \frac{c_i}{c_j},
    \end{equation}
    for all $t$ such that $\theta_i^*(t) > 0$ and $\theta_j^*(t) > 0$.
\end{corollary2_1*}

\begin{proof}
    We prove Corollary 2.1 by induction. 
    We assume the optimal job completion time has an order $T_1^*>T_2^*>\cdots>T_N^*$\footnote{This assumption is always true for OPT (see Proposition~\ref{???}) and we use it to simplify notations. Without this assumption, the proof of Corollary 2.1 still holds after we re-order the jobs by their completion times.}.
    By definition in Theorem \ref{theorem:CDR}, it is easy to prove that for any $i$ and $j$, we have $c_{i,j}=\frac{1}{c_{j,i}}$.
    So by proving $c_{i,j}=\frac{c_i}{c_j}$ we will automatically have $c_{j,i}=\frac{c_j}{c_i}$.
    
    For jobs 1 and 2, by Theorem \ref{theorem:CDR} there exists $c_{2,1}$. We can arbitrarily set $c_1$ and set $c_2=c_{2,1}c_1$.

    For jobs $1,2,\cdots,k$ ($k\geq 2$), suppose we can find $c_1,c_2,\cdots,c_k$ such that $c_{i,j}=\frac{c_i}{c_j}$ for $i<j\leq k$.
    Then consider job $k+1$.
    If there does not exist a job $i\leq k$ and a time $t$ making $\theta_i^*(t)>0$ and $\theta_{k+1}^*(t)>0$, we can arbitrarily set $c_{k+1}$.
    If there  exists a job $i\leq k$ and a time $t$ making $\theta_i^*(t)>0$ and $\theta_{k+1}^*(t)>0$, we set $c_{k+1}=\frac{c_i}{c_{i,k+1}}$.
    All we need to prove is that for all $j\neq i$, we have $c_{k+1}=c_{k+1,j}c_j$.
    We discuss the two cases.
    At time $t$, if $\theta_j^*(t)>0$, then since $\frac{s'(\theta_i^*(t))}{s'(\theta_j^*(t))} = \frac{c_i}{c_j}$ and $\frac{s'(\theta_{k+1}^*(t))}{s'(\theta_{i}^*(t))} = \frac{c_{k+1}}{c_i}$, we have $=c_{{k+1},j}=\frac{s'(\theta_{k+1}^*(t))}{s'(\theta_{j}^*(t))} = \frac{c_{k+1}}{c_{j}}$, indicating that $c_{k+1}=c_{k+1,j}c_j$.
    If $\theta_j^*(t)=0$, then by Theorem \ref{theorem:2}, we have $\frac{c_i}{c_j}\leq \frac{s'(\theta_{i}^*(t))}{s'(0)}<1$.
    Then for any $\tau<T_{k+1}^*$ (by the order of job completion, at time $\tau$ jobs $i,j,k+1$ are not completed) such that $\theta_j^*(\tau)>0$ and $\theta_{k+1}^*(\tau)>0$, since $\frac{c_i}{c_j}<1$, we have $\theta_i^*(\tau)>0$.
    So we have $\frac{s'(\theta_i^*(\tau))}{s'(\theta_j^*(\tau))} = \frac{c_i}{c_j}$ and $\frac{s'(\theta_{k+1}^*(\tau))}{s'(\theta_{i}^*(\tau))} = \frac{c_{k+1}}{c_i}$, indicating that $c_{k+1}=c_{k+1,j}c_j$.
    This completes our proof.
\end{proof}

Compared with Theorem \ref{theorem:CDR}, Corollary 2.1 further reduces the  space of $c$'s.
We only need $M$ constants $c_i$'s instead of $\frac{M(M-1)}{2}$ constants $c_{i,j}$'s.

In summary, in this section we identify the key ``consistent derivative ratio'' property associated with the optimal solution to OPT. These results significantly reduce the search space: to find the optimal solution, it is sufficient to consider only the schedulers that satisfy these properties.  

The next question is how to construct schedulers that satisfy this property. We address this by studying the problem of determining the allocations under fixed derivative ratios in the next section. 
After that, we will present our complete solution to OPT.

\section{General Water-Filling Algorithm}\label{sec:GWF}
In this section, we study the problem of determining the resource allocation at a given time instance under fixed derivative ratios. 
The solution to this problem will serve as a fundamental component in our overall algorithm for solving OPT.  

\subsection{Constrained Allocation Problem Statement}
We refer to this problem as the \emph{Constrained Allocation Problem (CAP)}.  
Given a speedup function $s(\theta)$, a total allocation budget $0 < b \leq B$, an integer $k\geq 2$, and $k$ constants $c_1 \geq c_2 \geq \cdots \geq c_k > 0$, CAP is defined as follows:  

\begin{subequations}\label{eq:CAP}
\begin{align}
\text{CAP: }\quad  & \text{Find }  \theta_1,\theta_2,\cdots,\theta_k \tag{\ref{eq:CAP}}\\
\text{subject to} \quad & \theta_1 + \theta_2 + \cdots + \theta_k=b, \label{eq:CAP:cst1}\\
& \theta_1 \leq \theta_2 \leq \cdots \leq \theta_k, \label{eq:CAP:cst2}\\
& \frac{s'(\theta_j)}{s'(\theta_i)} = \frac{c_j}{c_i}, \quad \text{if $i<j$ and $\theta_j \geq \theta_i > 0$}, \label{eq:CAP:cst3} \\
& \frac{s'(\theta_j)}{s'(0)} \geq \frac{c_j}{c_i}, \quad \text{if $i<j$  and $\theta_j>\theta_i=0$}.\label{eq:CAP:cst4}
\end{align}
\end{subequations}

Here, constraint~\eqref{eq:CAP:cst1} enforces that the sum of allocations $\theta_1,\theta_2,\ldots,\theta_k$ equals $b$;  
constraint~\eqref{eq:CAP:cst2} requires the allocations to be ordered;  
and constraints~\eqref{eq:CAP:cst3} and \eqref{eq:CAP:cst4} restrict the ratios of the derivatives by $c_i$'s.

\subsection{Algorithm to Solve CAP}
In this subsection, we present our algorithm to solve CAP.

Let $s'^{(-1)}(y)$ denote the inverse function of $s'(\theta)$.  
The domain of $s'^{(-1)}(y)$ is $y \in [s'(b), s'(0)]$ (if $s'(0)=\infty$, then $y \in [s'(b), \infty)$).
Since $s'(\theta)$ is continuous and strictly decreasing w.r.t. $\theta$, $s'^{(-1)}(y)$ is also continuous and strictly decreasing w.r.t. $y$.
Besides, we have  $s'^{(-1)}(y)\geq 0$.

We define an \emph{auxiliary function}, denoted by $g(h)$, to help solve CAP.  
Let $[h_{\min}, h_{\max}]$ denote the domain of $g(h)$
The function $g(h)$ and $[h_{\min}, h_{\max}]$ must satisfy the following conditions:  
\begin{subequations}\label{eq:cst:gh}
\begin{align}
&\text{$g(h)$ is continuous and strictly decreasing w.r.t. $h$. } \label{eq:cst:gh_1}\\
&\text{$g(h_{\max}) \leq \frac{s'(b)}{c_1}$.} \label{eq:cst:gh_2} \\
&\text{If $s'(0)=\infty$, then $g(h_{\min})=\infty$. If $s'(0)<\infty$, then  $g(h_{\min}) \geq \frac{s'(0)}{c_k}$. \label{eq:cst:gh_3}} 
\end{align}
\end{subequations}

For each CAP problem, we need to select an auxiliary function $g(h)$ and an interval $[h_{\min}, h_{\max}]$ such that conditions~\eqref{eq:cst:gh_1}, \eqref{eq:cst:gh_2}, and \eqref{eq:cst:gh_3} are satisfied.  
Examples of auxiliary functions that work for all CAP problems include:  
$g(h)=-h$ with $h_{\min}=-\infty$, $h_{\max}=\infty$; and  
$g(h)=\tfrac{1}{h}$ with $h_{\min}=0$, $h_{\max}=\infty$.  
In fact, there exists a broad class of feasible auxiliary functions, and different choices can be made depending on the specific CAP instance.  
We will discuss the selection of $g(h)$ and $[h_{\min}, h_{\max}]$ in the following subsections.  

Given $g(h)$, for each $i=1,2,\cdots,k$, we define the function $\theta_i(h)$ as:
\begin{equation}\label{eq:theta_i_h}
    \theta_i(h)=\begin{cases}
    0 & \text{if } c_ig(h)\geq s'(0), \\
    s'^{(-1)}\big(c_ig(h)\big) & \text{if } s'(b)<c_ig(h)<s'(0), \\
   b & \text{if } c_ig(h) \leq s'(b).
\end{cases}
\end{equation}
It can be shown that for each $i=1,2,\ldots,k$, $\theta_i(h)$ is continuous and increasing in $h \in [h_{\min}, h_{\max}]$.  
We refer to $\theta_i(h)$ as the \emph{water-filling function} for job $i$, since it can be interpreted as the ``water volume in bottle $i$ when the water level is $h$.''  
This analogy will be further discussed in Section~\ref{subsec:water_filling_illustration}.

Let $\beta(h)$ denote the sum of $\theta_i(h)$'s across all jobs: 
\begin{equation}\label{eq:beta_h}
    \beta(h) = \theta_1(h) + \theta_2(h) + \cdots + \theta_k(h).
\end{equation}
In the water-filling analogy of Section~\ref{subsec:water_filling_illustration}, $\beta(h)$ represents the ``total water volume across all bottles when the water level is $h$.''  

We consider the following problem, which is referred as \emph{Water-Filling Problem (WFP)}.
\begin{subequations}\label{eq:WFP}
\begin{align}
\text{WFP: }\quad  & \text{Find }  h \tag{\ref{eq:WFP}}\\
\text{subject to} \quad & \beta(h)=b. \label{eq:WFP:cst1}
\end{align}
\end{subequations}

The following proposition states the existence of a solution to WFP.
\begin{proposition}\label{prop:solution:WFP}
    WFP has a unique solution.
\end{proposition}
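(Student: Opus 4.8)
The plan is to reduce the proposition to a one-dimensional root-finding statement about the single function $\beta$ defined in~\eqref{eq:beta_h}, to get existence from the intermediate value theorem, and to get uniqueness from monotonicity. First I would record two elementary facts that follow directly from~\eqref{eq:theta_i_h}. Since $g$ is continuous by~\eqref{eq:cst:gh_1} and $s'^{(-1)}$ is continuous, each $\theta_i(h)$ is continuous (its three pieces agree at the breakpoints $c_i g(h)=s'(0)$ and $c_i g(h)=s'(b)$), so $\beta$ is continuous. Moreover, $g$ strictly decreasing and $s'^{(-1)}$ decreasing make the middle branch $s'^{(-1)}(c_i g(h))$ non-decreasing in $h$, and clamping at $0$ and $b$ preserves this; hence each $\theta_i$, and therefore $\beta$, is non-decreasing in $h$.

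Next I would pin down the endpoint values of $\beta$ on $[h_{\min},h_{\max}]$, which is where~\eqref{eq:cst:gh_2}, \eqref{eq:cst:gh_3}, and the ordering $c_1\ge\cdots\ge c_k>0$ enter. At $h_{\max}$, condition~\eqref{eq:cst:gh_2} gives $c_i g(h_{\max})\le c_i\, s'(b)/c_1\le s'(b)$ for every $i$ (using $c_i\le c_1$), so the third branch of~\eqref{eq:theta_i_h} forces $\theta_i(h_{\max})=b$ and hence $\beta(h_{\max})=kb$. At $h_{\min}$, if $s'(0)<\infty$ then condition~\eqref{eq:cst:gh_3} gives $c_i g(h_{\min})\ge c_i\, s'(0)/c_k\ge s'(0)$ for every $i$ (using $c_i\ge c_k$), so the first branch forces $\theta_i(h_{\min})=0$ and $\beta(h_{\min})=0$; if $s'(0)=\infty$ the same value $\beta(h_{\min})=0$ is obtained as a limit, since $g(h)\to\infty$ drives each $s'^{(-1)}(c_i g(h))\to 0$. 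Because $k\ge 2$ and $0<b\le B$, we get $\beta(h_{\min})=0<b<kb=\beta(h_{\max})$, so the intermediate value theorem yields some $h^\star$ with $\beta(h^\star)=b$, establishing existence.

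For uniqueness I would argue from the monotonicity just established. Suppose $\beta(h_1)=\beta(h_2)=b$ with $h_1<h_2$; then $\beta\equiv b$ on $[h_1,h_2]$, so every $\theta_i$ is constant there. Since $g$ is \emph{strictly} decreasing, the middle branch $s'^{(-1)}(c_i g(h))$ is strictly increasing, so a constant $\theta_i$ must lie entirely in a flat branch, i.e. $\theta_i\in\{0,b\}$ on $[h_1,h_2]$. Then $\beta$ is an integer multiple of $b$ there, and $\beta=b$ forces exactly one job at $b$ and the rest at $0$.

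The delicate point, which I expect to be the main obstacle, is precisely this potential plateau of $\beta$ at height $b$: it corresponds to the degenerate allocation in which a single job absorbs the whole budget while the others receive nothing, and such a plateau can genuinely occur when the spread of the $c_i$'s is large relative to that of $s'$ over $[0,b]$ (concretely, when $c_k/c_{k-1}<s'(b)/s'(0)$). On any such plateau, however, the induced allocation $(\theta_1(h),\dots,\theta_k(h))=(0,\dots,0,b)$ is \emph{constant}, so the object actually consumed downstream---the allocation---remains unique. I would therefore either (i) prove the sharp statement that the allocation produced by WFP is unique, which always holds by the argument above, or (ii) if uniqueness of $h$ itself is required, establish strict monotonicity of $\beta$ at height $b$ by ruling out this plateau under the regularity conditions on $s$ used elsewhere in the paper. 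The routine ingredients (continuity, monotonicity, endpoint evaluation, and the intermediate value theorem) are straightforward; essentially all the mathematical content of uniqueness is concentrated in controlling this single plateau.
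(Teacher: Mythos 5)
Your argument follows the same route as the paper's (very terse) proof: continuity and monotonicity of $\beta$, evaluation at the endpoints $h_{\min},h_{\max}$ using \eqref{eq:cst:gh_2}--\eqref{eq:cst:gh_3} together with $c_k\le c_i\le c_1$, and the intermediate value theorem. Your endpoint computations are correct and are exactly the details the paper leaves as ``it can be shown.'' Where you genuinely add something is the uniqueness step: the paper simply asserts that $\beta$ is ``increasing'' and concludes that the root is unique, whereas you correctly observe that $\beta$ is in general only non-decreasing (each $\theta_i(h)$ in \eqref{eq:theta_i_h} is flat on its saturated branches), and that a plateau of $\beta$ at height exactly $b$ can occur precisely when $c_k/c_{k-1}<s'(b)/s'(0)$, in which case the set of solutions $h$ to WFP is a nondegenerate interval on which the induced allocation is the constant vector $(0,\dots,0,b)$. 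This is a real counterexample to uniqueness of $h$ as literally stated (and one can check that $(0,\dots,0,b)$ does satisfy \eqref{eq:CAP:cst4} exactly under that condition), so Proposition~\ref{prop:solution:WFP} should be read as asserting uniqueness of the \emph{allocation} $(\theta_1(h),\dots,\theta_k(h))$ rather than of the water level $h$ itself. Your option~(i) is the right repair and costs nothing downstream: Proposition~\ref{prop:solution:WFP_CAP} and Theorem~\ref{theoren:CAP_WFP} only consume the allocation, and Proposition~\ref{proposition:solution CAP} independently guarantees that CAP has at most one solution, so any $h$ on the plateau yields the same (unique) answer. In short, your proof is correct, matches the paper's strategy, and tightens a step the paper's proof glosses over.
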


\begin{proof}
    It can be shown that $\beta(h)$ is continuous and increasing.
    Since $\beta(h_{\min})=0<b$ and $\beta(h_{\max})=kb> b$, there exists a unique $h$ making $\beta(h)=b$. 
    This completes our proof.
\end{proof}

The following proposition states the a solution of WFP is also a solution to CAP.
\begin{proposition}\label{prop:solution:WFP_CAP}
    If $h$ is a solution to WFP, then $\theta_i=\theta_i(h)$ for $i=1,2,\cdots,k$ is a solution to CAP.
\end{proposition}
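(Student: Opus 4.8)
The plan is to verify directly that $\theta_i=\theta_i(h)$ satisfies each of the four CAP constraints \eqref{eq:CAP:cst1}--\eqref{eq:CAP:cst4}, where $h$ is the unique WFP solution from Proposition~\ref{prop:solution:WFP}. Constraint \eqref{eq:CAP:cst1} is immediate, since it is literally the defining condition $\beta(h)=b$ of WFP. The remaining three constraints I would deduce from two preliminary structural observations, after which a short case split on whether each allocation equals $0$, is interior in $(0,b)$, or saturates at $b$ finishes the argument.

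The first preliminary observation is that $g(h)>0$ at the WFP solution. Indeed, if $g(h)\le 0$ then $c_i g(h)\le 0<s'(b)$ for every $i$ (recall $c_i>0$ and $s'(b)>0$), so the third branch of \eqref{eq:theta_i_h} forces $\theta_i(h)=b$ for all $i$, giving $\beta(h)=kb\ge 2b>b$ and contradicting $\beta(h)=b$. The second observation is a budget argument: since $\sum_i\theta_i(h)=b$ with each $\theta_i(h)\in[0,b]$, whenever two distinct jobs are both strictly positive, both allocations must lie strictly below $b$; for if some $\theta_p(h)=b$ while $\theta_q(h)>0$ with $q\ne p$, then $\beta(h)\ge\theta_p(h)+\theta_q(h)>b$, a contradiction. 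In particular, two simultaneously positive allocations are both interior.

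With $g(h)>0$ in hand, I would obtain \eqref{eq:CAP:cst2} from monotonicity: the ordering $c_1\ge\cdots\ge c_k$ together with $g(h)>0$ gives $c_1 g(h)\ge\cdots\ge c_k g(h)$, and since the map defined by the three branches of \eqref{eq:theta_i_h} is nonincreasing in its argument, this yields $\theta_1(h)\le\cdots\le\theta_k(h)$. For \eqref{eq:CAP:cst3}, take $i<j$ with $\theta_j\ge\theta_i>0$; by the budget observation both are interior, so the middle branch and $s'\big(s'^{(-1)}(y)\big)=y$ give $s'(\theta_i)=c_i g(h)$ and $s'(\theta_j)=c_j g(h)$, whence $s'(\theta_j)/s'(\theta_i)=c_j/c_i$. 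For \eqref{eq:CAP:cst4}, take $i<j$ with $\theta_j>\theta_i=0$; then $\theta_i=0$ forces $c_i g(h)\ge s'(0)$. If $\theta_j$ is interior, $s'(\theta_j)=c_j g(h)$ and $s'(\theta_j)/s'(0)=c_j g(h)/s'(0)\ge c_j/c_i$; if instead $\theta_j=b$, then $s'(\theta_j)=s'(b)\ge c_j g(h)$, and combining with $c_i g(h)\ge s'(0)$ gives $c_j/c_i\le s'(b)/s'(0)=s'(\theta_j)/s'(0)$. Either way \eqref{eq:CAP:cst4} holds.

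I expect the main obstacle to be the boundary bookkeeping rather than any deep step: the entire argument hinges on ruling out the degenerate configurations (two jobs both saturated at $b$, or a saturated job coexisting with an interior positive job) in which the ratio equality \eqref{eq:CAP:cst3} could fail, and the budget observation is precisely what excludes them. I would also treat $s'(0)=\infty$ separately, noting that \eqref{eq:cst:gh_3} forces $g(h_{\min})=\infty$, so $\theta_i(h)=0$ can occur only at $h=h_{\min}$ where $\beta=0\ne b$; hence at the WFP solution every allocation is strictly positive, making \eqref{eq:CAP:cst4} vacuous while \eqref{eq:CAP:cst3} covers all pairs.
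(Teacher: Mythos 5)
Your proposal is correct and follows essentially the same route as the paper's proof: direct verification of constraints \eqref{eq:CAP:cst1}--\eqref{eq:CAP:cst4}, anchored by the same observation that $g(h)>0$ at the WFP solution. In fact you are somewhat more careful than the paper, which asserts $0<\theta_i(h)<\theta_j(h)<b$ and uses $\theta_j(h)=s'^{(-1)}(c_j g(h))$ without explicitly ruling out the saturated case $\theta_j(h)=b$; your budget observation and the separate treatment of that branch in \eqref{eq:CAP:cst4} fill exactly those gaps.
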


\begin{proof}
    We need to verify this solution $\theta_i=\theta_i(h)$ satisfies all the constraints in CAP.

    By \eqref{eq:WFP:cst1}, we have $\theta_1 + \theta_2 + \cdots + \theta_k=b$. So the solution satisfies \eqref{eq:CAP:cst1}.

   It can be shown that $g(h)>0$ (otherwise $\theta_i(h)=b$ for all $i$'s and \eqref{eq:WFP:cst1} cannot be satisfied).
   
   For any $i<j$, we have $c_ig(h)\geq c_jg(h)$.
   By definition in \eqref{eq:theta_i_h}, we have $\theta_i(h)\leq \theta_j(h)$.
   So the solution satisfies \eqref{eq:CAP:cst2}.

   Consider any $i<j$ and $\theta_i(h)>0$.
   We have $0<\theta_i(h)<\theta_j(h)<b$.
   And we have $\theta_i(h)= s'^{(-1)}\big(c_ig(h)\big)$ and $\theta_j(h)= s'^{(-1)}\big(c_jg(h)\big)$.
   And we have $ \frac{s'(\theta_j)}{s'(\theta_i)} = \frac{c_jg(h)}{c_ig(h)}=\frac{c_j}{c_i}$.
   So the solution satisfies \eqref{eq:CAP:cst3}.

   Consider any $i<j$, $\theta_i(h)=0$, and $\theta_j(h)>0$.
   We have $c_ig(h)\geq s'(0)$ and $\theta_j(h)= s'^{(-1)}\big(c_jg(h)\big)$.
   And we have $ \frac{s'(\theta_j)}{s'(\theta_i)} = \frac{c_jg(h)}{s'(0)} \geq \frac{c_jg(h)}{c_ig(h)}=\frac{c_j}{c_i}$.
   So the solution satisfies \eqref{eq:CAP:cst4}.
   This completes our proof.
\end{proof}

The following proposition states the number of solutions to CAP.

\begin{proposition}\label{proposition:solution CAP}
    CAP has at most one solution.  
\end{proposition}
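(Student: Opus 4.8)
The plan is to prove uniqueness by establishing a converse to Proposition~\ref{prop:solution:WFP_CAP}: \emph{every} solution of CAP must coincide with a water-filling solution $\theta_i=\theta_i(h)$ (as defined in \eqref{eq:theta_i_h}) for some level $h$ satisfying $\beta(h)=b$, after which I argue that all admissible levels yield the same allocation vector. So let $(\theta_1,\dots,\theta_k)$ be an arbitrary CAP solution. Because $\sum_i\theta_i=b>0$ and the allocations are ordered by \eqref{eq:CAP:cst2}, job $k$ is always positive, so at least one job is positive. Suppose first that at least two jobs are positive. Then no $\theta_i$ can equal $b$ (otherwise the sum would already exceed $b$), so every positive allocation lies in $(0,b)$, and constraint \eqref{eq:CAP:cst3} forces $s'(\theta_i)/c_i$ to equal a common value $\lambda$ over all positive jobs. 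Since $g$ is continuous and strictly decreasing with range containing $[s'(b)/c_1,\,s'(0)/c_k]$ (by \eqref{eq:cst:gh_2}--\eqref{eq:cst:gh_3} together with $c_1\ge\cdots\ge c_k$), and since $\lambda\in(s'(b)/c_1,\,s'(0)/c_k)$, there is a unique $h$ with $g(h)=\lambda$. Evaluating \eqref{eq:theta_i_h} at this $h$ reproduces $\theta_i(h)=s'^{(-1)}(c_i\lambda)=\theta_i$ on the positive jobs, while \eqref{eq:CAP:cst4} gives $c_ig(h)\ge s'(0)$ and hence $\theta_i(h)=0=\theta_i$ on the zero jobs; thus $\theta$ is the water-filling solution at $h$, and $\beta(h)=b$.

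The delicate part, which I expect to be the main obstacle, is the degenerate case in which exactly one job is positive, i.e.\ $\theta_k=b$ and $\theta_i=0$ for $i<k$. Here \eqref{eq:CAP:cst3} is vacuous, so the solution does not by itself pin down a level $\lambda$. Instead I would use \eqref{eq:CAP:cst4}, which in this case reads $s'(b)/s'(0)\ge c_k/c_i$, to show that the set of levels $h$ with $g(h)\in[s'(0)/c_{k-1},\,s'(b)/c_k]$ is a nonempty subinterval of the range of $g$; any such $h$ satisfies $\theta_k(h)=b$, $\theta_i(h)=0$ for $i<k$, and $\beta(h)=b$. So the degenerate solution is again a water-filling solution, but one realized over an entire interval of admissible levels rather than a single level. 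This failure of the level to be unique is precisely why I cannot conclude directly from a level-to-allocation bijection.

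I would then close with a short monotonicity argument that handles both cases uniformly. Let $\theta$ and $\tilde\theta$ be two CAP solutions, realized at levels $h$ and $\tilde h$, both with $\beta(\cdot)=b$. Since $\beta$ is continuous and non-decreasing, the set $\{h : \beta(h)=b\}$ is an interval on which $\beta$ is constant; and because $\beta=\sum_i\theta_i(\cdot)$ is a sum of non-decreasing functions, a constant sum on an interval forces each summand $\theta_i(\cdot)$ to be constant there as well. Hence $\theta_i(h)=\theta_i(\tilde h)$ for every $i$, i.e.\ $\theta=\tilde\theta$, which proves that CAP has at most one solution. The genuine work lies in the case analysis and boundary bookkeeping of the first two paragraphs; the concluding step is immediate.
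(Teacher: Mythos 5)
Your proof is correct, but it takes a genuinely different route from the paper's. The paper argues directly by contradiction on two hypothetical solutions: since both sum to $b$, one can (after possibly swapping the roles of the two solutions) find indices $i<j$ with $\hat{\theta}_i<\theta_i\leq\theta_j<\hat{\theta}_j$; constraints \eqref{eq:CAP:cst3}--\eqref{eq:CAP:cst4} then force $\frac{s'(\hat{\theta}_j)}{s'(\hat{\theta}_i)}\geq\frac{c_j}{c_i}=\frac{s'(\theta_j)}{s'(\theta_i)}$, while the strict monotonicity of $s'$ forces the opposite strict inequality---a two-line contradiction that never touches the water-filling machinery. You instead prove a converse to Proposition~\ref{prop:solution:WFP_CAP} (every CAP solution is a water-filling solution at some level $h$ with $\beta(h)=b$) and then deduce uniqueness from the monotonicity of $\beta$ and of each $\theta_i(\cdot)$; your case analysis (the common value $\lambda=s'(\theta_i)/c_i$ over positive jobs, the degenerate single-positive-job case where the level is only determined up to an interval, and the observation that a constant sum of non-decreasing functions forces each summand to be constant) all checks out, including the boundary bookkeeping via \eqref{eq:cst:gh_2}--\eqref{eq:cst:gh_3}. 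The trade-off: the paper's argument is shorter, more elementary, and independent of the choice of auxiliary function $g$; yours is longer but establishes the stronger structural fact that the water-filling representation is \emph{complete} (every CAP solution arises from WFP), which the paper only obtains indirectly by combining Propositions~\ref{prop:solution:WFP}, \ref{prop:solution:WFP_CAP}, and \ref{proposition:solution CAP} into Theorem~\ref{theoren:CAP_WFP}. One minor point worth noting in your write-up: when $s'(0)=\infty$, constraint \eqref{eq:CAP:cst4} is unsatisfiable, so zero allocations (and in particular your degenerate case) cannot occur and that branch is vacuous; this is consistent with your argument but deserves a sentence.
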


\begin{proof}
    We prove Proposition~\ref{proposition:solution CAP} by contradiction.
    Suppose CAP has two different solutions $\theta_1,\theta_2,\cdots,\theta_k$ and 
    $\hat{\theta}_1,\hat{\theta}_2,\cdots,\hat{\theta}_k$.
    Since $\theta_1 + \theta_2 + \cdots + \theta_k =\hat{\theta}_1 + \hat{\theta}_2 + \cdots + \hat{\theta}_k=b$, without loss of generality, we can find $i<j$ making $\hat{\theta}_i<\theta_i \leq \theta_j<\hat{\theta}_j$.
    Based on constraints~\eqref{eq:CAP:cst3} and \eqref{eq:CAP:cst4}, we have
    \begin{equation}\label{eq:proof:prop1}
      \frac{s'(\hat{\theta}_j)}{s'(\hat{\theta}_i)}\geq  \frac{c_j}{c_i}= \frac{s'(\theta_j)}{s'(\theta_i)}. 
    \end{equation}
    On the other hand, since $s'(\theta)$ is strictly decreasing and positive, we have $0<s'(\hat{\theta}_j)<s'(\theta_j)$ and $0<s'(\theta_i)<s'(\hat{\theta}_i)$.
    So we have
    \begin{equation}\label{eq:proof:prop2}
      \frac{s'(\hat{\theta}_j)}{s'(\hat{\theta}_i)}< \frac{s'(\theta_j)}{s'(\theta_i)}. 
    \end{equation}
    We can see \eqref{eq:proof:prop1} and \eqref{eq:proof:prop2} leads to a contradiction.
    This completes our proof.
\end{proof}

Combining Propositions \ref{prop:solution:WFP}, \ref{prop:solution:WFP_CAP}, and \ref{proposition:solution CAP}, we have the following theorem.
\begin{theorem}\label{theoren:CAP_WFP}
    CAP has a unique solution, which is the solution of WFP.
\end{theorem}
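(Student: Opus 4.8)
The plan is to obtain the theorem as a direct synthesis of the three preceding propositions, which jointly supply existence, constructibility, and uniqueness. First I would appeal to Proposition~\ref{prop:solution:WFP} to secure a unique value $h^\star$ satisfying $\beta(h^\star)=b$, i.e.\ a solution to WFP. Then I would feed this $h^\star$ into Proposition~\ref{prop:solution:WFP_CAP}, which certifies that the induced allocation $\theta_i=\theta_i(h^\star)$, $i=1,2,\ldots,k$, satisfies every constraint of CAP. This already establishes that CAP is solvable and exhibits an explicit solution.

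Next I would invoke Proposition~\ref{proposition:solution CAP}, which asserts that CAP admits at most one solution. Combining this ``at most one'' statement with the existence just established yields ``exactly one'': CAP has a unique solution. Moreover, since the candidate $\theta_i(h^\star)$ built from the WFP solution is itself a CAP solution, uniqueness forces the (now unique) CAP solution to coincide with this WFP-derived allocation, which is precisely the identification claimed in the theorem.

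There is no genuine analytic obstacle remaining at this stage, since all the substantive work---continuity and monotonicity of $\beta(h)$, verification of constraints~\eqref{eq:CAP:cst1}--\eqref{eq:CAP:cst4}, and the strict-monotonicity argument ruling out two distinct CAP solutions---has already been discharged in the propositions. The only point requiring care is logical bookkeeping: one must chain the existence direction (a WFP solution \emph{produces} a CAP solution via Proposition~\ref{prop:solution:WFP_CAP}) with the uniqueness direction (Proposition~\ref{proposition:solution CAP}) in the correct order, so that existence together with ``at most one'' gives ``exactly one,'' and then confirm that the unique CAP solution is the one exhibited. I would close by noting that this equivalence is what makes the water-filling reformulation valuable: solving the scalar problem WFP is computationally far simpler than directly searching the $k$-dimensional feasible set of CAP, yet it returns the exact CAP solution.
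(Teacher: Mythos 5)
Your proposal is correct and matches the paper exactly: the paper derives Theorem~\ref{theoren:CAP_WFP} by combining Propositions~\ref{prop:solution:WFP}, \ref{prop:solution:WFP_CAP}, and \ref{proposition:solution CAP} in precisely the way you describe (WFP has a unique solution, that solution induces a CAP solution, and CAP has at most one solution). No gaps.
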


\begin{algorithm}
\caption{General Water-Filling (GWF) Algorithm to Solve CAP}\label{alg:CAP}
\begin{algorithmic}[1] 
\REQUIRE   $s(\theta)$, $b$, $k$, and $c_1, c_2 , \cdots, c_k $.
\ENSURE  The solution to CAP: $ \theta_1^* , \theta_2^* , \cdots , \theta_k^*$.
\STATE Select an auxiliary functions $g(h)$ that satisfies \eqref{eq:cst:gh_1}, \eqref{eq:cst:gh_2}, and \eqref{eq:cst:gh_3}.
\STATE Solve WFP and get its solution $h^*$.
\STATE Compute $\theta_i^*=\theta_i(h^*)$ based on \eqref{eq:theta_i_h} for $i=1,2,\cdots,k$.
\end{algorithmic}
\end{algorithm}

Based on Theorem \ref{theoren:CAP_WFP}, we propose the \emph{General Water Filling (GWF)} algorithm to solve CAP, as shown in Algorithm~\ref{alg:CAP}.

Note that in Step 1 of Algorithm~\ref{alg:CAP}, we can select $g(h)$ from a large class of feasible auxiliary functions.
However, a good selection of $g(h)$ can make WFP much easier to solve, i.e., make Step 2  of Algorithm~\ref{alg:CAP} easier.
We will use an example to show this point in the next subsection.

\subsection{An Example for Selecting $g(h)$}
Consider a speedup function: 
$s(\theta)=a(\theta+z)^p-az^p$, where $a>0$, $z\geq 0$, and $0<p<1$ are constants.
Note that when $z=0$, the speedup function $s(\theta)=a\theta^p$ is the one considered in \cite{berg2020hesrpt}.
We will use this speedup function as an example to show how to select $g(h)$ to make WFP easier to solve.

For this speedup function, we have $s'(\theta)=ap(\theta+z)^{p-1}$ and $s'^{(-1)}(y)=(\frac{y}{ap})^{\frac{1}{p-1}}-z$.
Consider \eqref{eq:theta_i_h}.
When $s'(b)<c_ig(h)<s'(0)$, for each $i=1,2,\cdots, k$ we have 
\begin{equation}\label{eq:example1}
    \theta_i(h)=s'^{(-1)}\big(c_ig(h)\big)=(\frac{c_ig(h)}{ap})^{\frac{1}{p-1}}-z.
\end{equation}

If we select $g(h)=aph^{p-1}$ and $h_{\min}=0$, $h_{\max}=\infty$, then \eqref{eq:example1} becomes:
\begin{equation*}
    \theta_i(h)=c_i^{\frac{1}{p-1}}h-z=c_i^{\frac{1}{p-1}}(h-zc_i^{\frac{1}{1-p}}),
\end{equation*}
and for each $i=1,2,\cdots, k$ we have
\begin{equation}\label{eq:example:WF1}
    \theta_i(h)=\begin{cases}
    0 & \text{if } h \leq zc_i^{\frac{1}{1-p}}, \\
    c_i^{\frac{1}{p-1}}(h-zc_i^{\frac{1}{1-p}}) & \text{if } zc_i^{\frac{1}{1-p}}<h<(z+b)c_i^{\frac{1}{1-p}}, \\
   b & \text{if }h \geq (z+b)c_i^{\frac{1}{1-p}}.
\end{cases}
\end{equation}
We can see the selection of $g(h)=aph^{p-1}$ will give us a piece-wise linear $\theta_i(h)$ for all $i$'s, making WFP  easy to solve.
This means  $g(h)=aph^{p-1}$ is a good selection.

On the other hand, if we select $g(h)=ap\log(1+\frac{1}{h})$ and $h_{\min}=0$, $h_{\max}=\infty$, then \eqref{eq:example1} becomes:
and for each $i=1,2,\cdots, k$ we have
\begin{equation}\label{eq:example:WF2}
    \theta_i(h)=\begin{cases}
    0 & \text{if } c_ig(h)\geq s'(0), \\
   \big(c_i\log(1+\frac{1}{h})\big)^{\frac{1}{p-1}}-z & \text{if } s'(b)<c_ig(h)<s'(0), \\
   b & \text{if } c_ig(h) \leq s'(b).
\end{cases}
\end{equation}
This is non-linear and make WFP not easy to solve. This means $g(h)=ap\log(1+\frac{1}{h})$ is not a good selection.

\subsection{Water-Filling Illustration} \label{subsec:water_filling_illustration}
Water-filling algorithm is a classic algorithm for solving the power allocation problem in multi-channel communication systems.  
As illustrated in Fig.~5.11 in \cite{tse2005fundamentals}, the algorithm can be visualized as pouring ``water'' (i.e., power) into parallel subcarriers such that the water level across all subcarriers containing water is equalized. Each subcarrier has the same ``width'' but a different ``bottom height,'' determined by its channel quality.  
The water-filling algorithm is proven to be optimal for power allocation in multi-channel communication. Further details can be found in Chapter~5.3.3 of \cite{tse2005fundamentals}.

\begin{figure}
  \centering
  \includegraphics[width=0.8\textwidth]{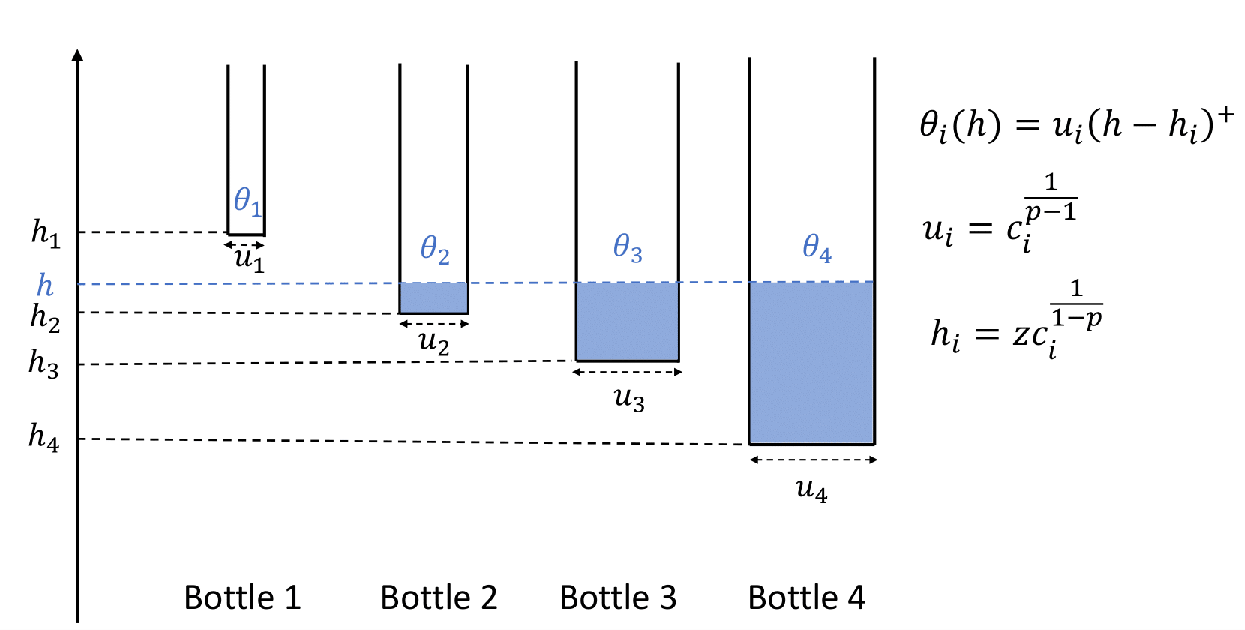}
  \caption{An illustration for GWF Algorithm when $s(\theta)=a(\theta+z)^p-az^p$ and $k=4$.}
  \label{fig:WF1} 
\end{figure}

We observe that in our GWF algorithm for solving SAP, the allocation result can be naturally interpreted through the water-filling analogy.  
Fig.~\ref{fig:WF1} illustrates the GWF algorithm when $s(\theta)=a(\theta+z)^p-az^p$ and $k=4$.  
In this case, we construct $k=4$  ``rectangle'' bottles, each with its own ``width'' and ``bottom height.''  

Let $u_i$ and $h_i$ denote the width and bottom height of bottle $i$, respectively, and let $\theta_i(h)$ denote the volume of water in bottle $i$ when the water level is $h$.  
We set $u_i = c_i^{\tfrac{1}{p-1}}$ and $h_i = zc_i^{\tfrac{1}{1-p}}$.  

Define $(x)^+ = \max(0,x)$.  
From \eqref{eq:example:WF1}, we have $\theta_i(h) = u_i(h-h_i)^+$ whenever $\theta_i(h) < b$.  
Thus, solving WFP reduces to finding the water level $h$ in Fig.~\ref{fig:WF1} such that the total water volume across all bottles equals $b$.  

This is directly analogous to the classical water-filling algorithm in multi-channel communication  with one important generalization:  
in GWF (Fig.~\ref{fig:WF1}), the bottle widths differ across jobs, whereas in the classical water-filling algorithm, all channel widths are identical.  
Note that when $z=0$, all bottles have the same bottom heights, which indicates for any $b>0$, $\theta_i^*>0$ for all $i$'s.

\begin{figure}
  \centering
  \includegraphics[width=0.8\textwidth]{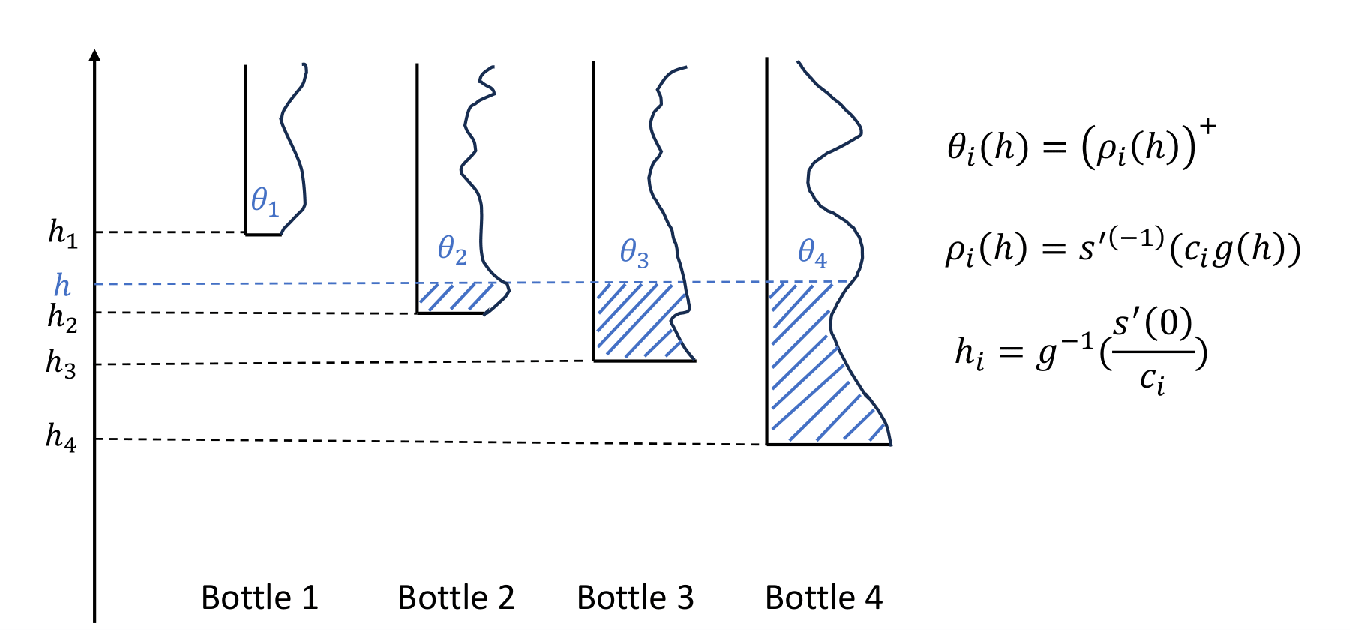}
  \caption{An illustration for GWF Algorithm for non-rectangle bottles.}
  \label{fig:WF2} 
\end{figure}

A more general illustration for GWF is shown in Fig.~\ref{fig:WF2}, where the bottles are not rectangle-shape.
Here we define $\theta_i(h)=(\rho_i(h))^+$, where $\rho_i(h)=s'^{(-1)}(c_ig(h))$.
Let $g^{-1}(\cdot)$ denote the inverse function of $g(h)$.
And we define $h_i=g^{-1}(\frac{s'(0)}{c_i})$.
We can see $\theta_i(h)$ in \eqref{eq:theta_i_h} can be illustrated as filling water in bottles, as shown in Fig.~\ref{fig:WF2}.
Note that when $s'(0)=\infty$, we have $h_1=h_2=\cdots=h_k$, indicating that all bottles have the same bottom height.

\subsection{Solving WFP}
In this subsection, we discuss how to solve WFP, i.e., how to run Step 2 in Algorithm \ref{alg:CAP}.

\subsubsection{Closed-Form Solution}
In the ideal case, we can get a closed-form solution to WFP.
In this paper we identify a class of speedup functions $s(\theta)$, which we call \emph{regular functions}.
Under regular $s(\theta)$, we can get a closed-form solution to WFP, and thus get a  closed-form solution to CAP by Algorithm \ref{alg:CAP}.

\begin{definition}\label{def:regular}
    A speedup function $s(\theta)$ is regular if there exists $\alpha\neq 0,\gamma\neq 0$ and $z\in \mathcal{R}$ such that $s'(\theta)=\alpha(\theta+z)^\gamma$ for all $\theta\in[0,B]$.
\end{definition}

\begin{table}
\small
\caption{Some examples of regular speedup functions}\label{tab:reg1}
\centering
\begin{tabular}{|c|c|}
\hline Regular Speedup Function & Example \\
\hline
$s(\theta)=a(\theta+z)^p-az^p, \theta\in[0,B]$, where $a>0$, $z\geq 0$, and $0<p<1$ & $s(\theta)=(\theta+1)^{0.5}$ \\
\hline
$s(\theta)=a\ln (p\theta+1), \theta\in[0,B]$, where $a>0$ and $p>0$ & $s(\theta)=\ln (\theta+1)$ \\
\hline
$s(\theta)=az^p-a(\theta+z)^p, \theta\in[0,B]$, where $a>0$, $z\geq 0$, and $p<0$ & $s(\theta)=\frac{\theta}{\theta+1}$ \\
\hline
$s(\theta)=az^p-a(z-\theta)^p, \theta\in[0,B]$, where $a>0$, $ z\geq B$, and $p>1$ & $s(\theta)=2\theta-\theta^2$ ($B\leq 1$) \\
\hline
\end{tabular}
\end{table}

Some examples of regular speedup functions are given in Table \ref{tab:reg1}.
Readers can verify each row in Table \ref{tab:reg1} satisfies Definition \ref{def:regular}.
We can see the speedup function $s(\theta)=a\theta^p$ ($0<p<1)$ considered in \cite{berg2020hesrpt} is regular.

For regular $s(\theta)$ such that $s'(\theta)=\alpha(\theta+z)^\gamma$, we have a closed-form expression for $s'^{(-1)}(y)$:
\begin{equation*}
    s'^{(-1)}(y)=(\frac{y}{\alpha})^{\frac{1}{\gamma}}-z.
\end{equation*}
We can select $g(h)=(\alpha h)^\gamma$.
Based on \ref{eq:theta_i_h}, we have
$\theta_i(h) = u_i(h-h_i)^+$ whenever $\theta_i(h) < b$, where $u_i = c_i^{\tfrac{1}{\gamma}}$ and $h_i = z c_i^{-\tfrac{1}{\gamma}}$.  
In the water-filling illustration, the bottles are rectangles.
And the problem WFP is:
\begin{subequations}
\begin{align*}
\text{WFP under regular $s(\theta)$:}\quad  & \text{Find }  h \\
& \text{subject to} \quad  \beta(h)=\sum_{i=1}^ku_i(h-h_i)^+=b. 
\end{align*}
\end{subequations}
Notice that $\beta(h)$ is a piecewise linear function.
WFP under regular $s(\theta)$ can be solved easily. 
We omit the details here.

For general non-regular $s(\theta)$, whether we can select a good $g(h)$ to get an closed-form expression of the solution to WFP is unclear to us.
We believe for functions like $s(\theta)=x\theta^\frac{1}{2}+\ln(1+\theta)$, it is hard to write down a closed-form expression for $s'^{(-1)}(y)$, and thus it is very difficult to get an closed-form solution to WFP.

\subsubsection{Numerical Methods}
For non-regular speedup functions that we don't have closed-form solutions, we can always use numerical methods to get a numerical solution.
For example, we can arbitrarily select a feasible $g(h)$ (e.g., select $g(h)=-h$ with $h_{\min}=-\infty$, $h_{\max}=\infty$), and numerically compute $\beta(h)$ by \eqref{eq:beta_h}.
When $\beta(h)<b$, we increase $h$;
when $\beta(h)>b$, we decrease $h$, until the gap between $\beta(h)$ and $b$ is small enough.
This can be done by methods like bisection \cite{sikorski1982bisection} to solve it.
An alternative efficient way to solve WFP numerically will be introduced in Section \ref{???}.

\section{SmartFill Algorithm}\label{sec:SmartFill}
Based on the CDR Rule and the GWF algorithm, in this section we present the \emph{SmartFill} Algorithm, which is shown to be able to solve OPT optimally.
When $s(\theta)$ is regular, SmartFill provides a closed-form solution.
When $s(\theta)$ is non-regular, SmartFill provides a numerical method.

\subsection{Preliminary Results}
Before presenting SmartFill, we introduce some preliminary results to OPT. 
Many of them directly come from \cite{berg2020hesrpt}.

Let $m^*(t)$ denote the number of jobs in the system at time $t$ under the optimal scheduling policy. We begin with the following proposition:
\begin{proposition}\label{prop:7}
    For any two time instances $t_1$ and $t_2$, if $m^*(t_1) = m^*(t_2)$, then $\theta_i^*(t_1) = \theta_i^*(t_2)$ for all $i = 1, 2, \cdots, M$.
\end{proposition}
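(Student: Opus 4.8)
\begin{proofsketch}[Proof Proposal]
The plan is to show that once the set of active (unfinished) jobs is fixed, the optimal instantaneous allocation is pinned down as the unique solution of a single Constrained Allocation Problem, so that equal values of $m^*$ force equal allocations. The engine of the argument will be the uniqueness result of Theorem~\ref{theoren:CAP_WFP}, fed by the structural constraints supplied by Corollary 2.1 and Theorem~\ref{theorem:2}.

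First I would reduce the claim to the statement that the \emph{active set}, not merely its cardinality, is identical at $t_1$ and $t_2$. Assume without loss of generality $t_1 < t_2$. Since a job never re-enters the system after completing, $m^*(t)$ is non-increasing in $t$; hence $m^*(t_1) = m^*(t_2) = m$ forces $m^*(t) \equiv m$ on all of $[t_1,t_2]$. Consequently no job completes in $(t_1,t_2)$, so the set $S$ of active jobs (with $|S| = m$) coincides at $t_1$ and $t_2$. The case $m = 0$ is trivial since then $\theta_i^*(t_1) = \theta_i^*(t_2) = 0$ for all $i$, so I assume $m \geq 1$.

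Next I would establish that the server is work-conserving, i.e.\ $\sum_{i} \theta_i^*(t) = B$ whenever at least one job is active. This is the only step requiring a genuinely new construction, and I expect it to be the main obstacle (unless work-conservation is already available as a preliminary fact). The argument is an exchange: if on a positive-measure subset of times the total allocation were strictly below $B$, one could redirect the unused budget into any single active job over that subset. Because $s'(\cdot) > 0$, this strictly increases the work done on that job, so it completes no later---and, on a positive-measure set, strictly earlier---while every other job's trajectory and completion time is left unchanged; as all weights are positive this strictly decreases $J$, contradicting optimality. The care needed here is to verify feasibility of the perturbed schedule and to localize the perturbation so that no other job is delayed. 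It follows that the total allocation equals $B$ at every instant of $[t_1,t_2]$, in particular at $t_1$ and $t_2$.

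Finally I would invoke uniqueness of the CAP solution. I fix $S$ together with its associated constants $\{c_i\}_{i\in S}$ from Corollary 2.1, reindexed so that the $c$'s are non-increasing; since $s'$ is strictly decreasing, the ratio conditions force the corresponding allocations to be non-decreasing, so this reindexing is consistent (and identical) at both times, and ties in the $c$'s cause no ambiguity. At each of $t = t_1, t_2$ the vector $(\theta_i^*(t))_{i\in S}$ then satisfies: the budget equation~\eqref{eq:CAP:cst1} with $b = B$ (work-conservation), the ordering~\eqref{eq:CAP:cst2}, the derivative-ratio condition~\eqref{eq:CAP:cst3} for pairs with positive allocations (Corollary 2.1), and the boundary condition~\eqref{eq:CAP:cst4} for pairs in which the smaller-$c$ job receives zero (Theorem~\ref{theorem:2}). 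Hence both allocation vectors solve the \emph{same} CAP instance determined by $(s, B, m, \{c_i\}_{i\in S})$, which by Theorem~\ref{theoren:CAP_WFP} has a unique solution. Therefore $\theta_i^*(t_1) = \theta_i^*(t_2)$ for every $i \in S$, while the already-completed jobs receive $0$ at both times by~\eqref{eq:cst2}; this yields the conclusion for all $i = 1,2,\dots,M$.
\end{proofsketch}
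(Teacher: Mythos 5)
Your proposal is correct in outline, but it takes a genuinely different route from the paper. The paper does not prove Proposition~\ref{prop:7} with its own machinery at all: it states that the proof is \emph{identical to that of Theorem~1 in~\cite{berg2020hesrpt}}, which is a concavity (Jensen) averaging argument --- on any interval between consecutive completions the active set is fixed, and replacing a time-varying allocation by its time average keeps the budget constraint while weakly increasing every job's cumulative service, strictly unless the allocation was already constant; hence the optimal allocation is constant on each such interval. That argument is elementary, self-contained, and needs neither work conservation nor the CDR machinery. You instead derive the result from the paper's own structural results: Corollary~2.1 and Theorem~\ref{theorem:2} show that at each instant the active allocation vector satisfies all four constraints of a single CAP instance with $b=B$, and Theorem~\ref{theoren:CAP_WFP} (uniqueness) pins it down. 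What your route buys is a nice consistency check on the CAP formulation and an explanation of \emph{why} the allocation is determined by the active set (it is the unique CDR-compatible, budget-exhausting point); what it costs is (i) a work-conservation lemma, which the paper only asserts later (around~\eqref{eq:cst4}) and which your exchange argument does supply, and (ii) a dependence on Corollary~2.1, so you should confirm there is no circularity --- the corollary's proof invokes the completion-time ordering, but by its own footnote it needs only that completion times can be ordered, not Proposition~\ref{prop:7} itself. One small hole to patch: when Corollary~2.1 sets some $c_{k+1}$ ``arbitrarily'' (a job that never shares positive allocation with an earlier one), an arbitrary choice need not satisfy constraint~\eqref{eq:CAP:cst4}, so you must argue such constants can be chosen large enough, or restrict the CAP instance to pairs actually constrained by Theorems~\ref{theorem:CDR} and~\ref{theorem:2}.
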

Proposition~\ref{prop:7} is adapted from Theorem 1 in \cite{berg2020hesrpt}. 
The proof of Proposition~\ref{prop:7} is identical to the proof of Theorem 1 in \cite{berg2020hesrpt}.

Regarding the job completion order, we have the following proposition:
\begin{proposition}\label{prop:8}
    The optimal policy completes jobs in Shortest-Job-First (SJF) order: $M, M-1, \cdots, 1$.
\end{proposition}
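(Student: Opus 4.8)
The plan is to prove Proposition~\ref{prop:8} by a contradiction-based interchange argument. Suppose the optimal policy does \emph{not} complete jobs in SJF order. Since the jobs are indexed so that $x_1 \ge x_2 \ge \cdots \ge x_M$ while $w_1 \le w_2 \le \cdots \le w_M$, the SJF order is exactly $T_M^* \le T_{M-1}^* \le \cdots \le T_1^*$, so a violation produces a pair of indices $i<j$ (hence $x_i \ge x_j$ and $w_i \le w_j$) with $T_i^* < T_j^*$: a larger, lower-weight job finishes before a smaller, higher-weight one. A standard adjacent-inversion reduction lets me assume this pair is consecutive in the completion order, so that on the window $[0,T_j^*]$ no other job completes and the identity of every other active job is unaffected by the perturbation I will perform. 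Consequently, reallocating resources only between jobs $i$ and $j$ while keeping their combined budget fixed leaves the schedule of every other job, and the bandwidth constraint~\eqref{eq:cst1}, intact.

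The core step is to reallocate the combined budget $\phi(t) := \theta_i^*(t) + \theta_j^*(t)$ between the two jobs so that the smaller job $j$ finishes \emph{earlier} and the larger job $i$ finishes \emph{later}, leaving $\phi(t)$ (hence all other jobs) untouched. The key bookkeeping is that the total service the pair must absorb over $[0,T_j^*]$ is invariant: originally job $i$ absorbs $x_i$ by $T_i^*$ and job $j$ absorbs $x_j$ by $T_j^*$, totaling $x_i+x_j$, and any reallocation that delivers $x_j$ to one job and $x_i$ to the other over the same horizon conserves this total. Because the larger job is now granted the longer horizon $[0,T_j^*]$, the extra time it gains offsets its larger size. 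I would make this precise by embedding the original split and the swapped target in a one-parameter family of reallocations---continuously shifting share from job $i$ to job $j$ early on and handing the freed resource to job $i$ later---and invoking continuity together with the intermediate value theorem to obtain a feasible schedule whose completion times are shifted toward the SJF order. Proposition~\ref{prop:7} is the tool I would use to keep the construction well structured and to certify that no other completion time is disturbed.

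With such a reallocation in hand, the objective changes only through jobs $i$ and $j$, whose joint contribution decreases continuously from $w_i T_i^* + w_j T_j^*$ toward $w_j T_i^* + w_i T_j^*$, a net change of $(w_j - w_i)(T_i^* - T_j^*)$. Since $w_j \ge w_i$ and $T_i^* < T_j^*$, this quantity is nonpositive, and it is strictly negative whenever $w_i < w_j$, contradicting the optimality of $\theta^*(\cdot)$. The boundary case $w_i = w_j$ is harmless: the two completions can then be exchanged at no cost, so an SJF-ordered schedule is optimal without loss of generality (breaking equal-size ties arbitrarily). This establishes the completion order $M, M-1, \ldots, 1$.

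The main obstacle I anticipate is the feasibility of the reallocation under \emph{strict} concavity of $s$. One cannot merely concentrate $\phi(t)$ onto a single job at each instant: since $s(0)=0$ and $s$ is concave, $s$ is subadditive, so merging resource onto one job strictly lowers the instantaneous total service and could leave the larger job unable to finish by $T_j^*$. The reallocation must therefore keep both jobs sharing enough of $\phi(t)$ to conserve the service budget, and the delicate part will be proving rigorously that the pair $(T_i,T_j)$ moves continuously---and in the favorable direction---as the reallocation parameter sweeps from the original split toward the swapped one, so that a strictly improving feasible schedule is guaranteed to exist.
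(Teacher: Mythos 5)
Your overall strategy---an exchange argument that swaps the roles of a larger, lower-weight job $i$ and a smaller, higher-weight job $j$ with $T_i^*<T_j^*$, and charges the objective a net change of $(w_i-w_j)(T_j^*-T_i^*)\le 0$---is the same approach as the proof the paper defers to (Theorem~2 of \cite{berg2020hesrpt}). The final accounting is valid \emph{provided} the swapped schedule is feasible, i.e.\ provided job $j$ now completes by $T_i^*$ \emph{and} job $i$ still completes by $T_j^*$. That feasibility is the crux, and it is exactly where your argument has a genuine gap---one you partly flag yourself. The claim that ``the total service the pair must absorb over $[0,T_j^*]$ is invariant'' under reallocations of the combined budget $\phi(t)=\theta_i^*(t)+\theta_j^*(t)$ is false for strictly concave $s$: the deliverable total $\int\big(s(\theta_i(t))+s(\theta_j(t))\big)\,dt$ depends on how $\phi(t)$ is split at each instant (subadditivity), so conservation of $x_i+x_j$ is precisely what must be proved, not assumed. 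Likewise, continuity plus the intermediate value theorem only shows that \emph{some} intermediate completion-time pairs are reachable; it does not show that a pair dominating $(T_j^*,T_i^*)$ is reachable, i.e.\ that the sweep moves ``in the favorable direction,'' which you explicitly leave open.

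What closes the gap is a quantitative concavity argument rather than IVT. Over $[0,T_i^*]$, the set of achievable cumulative-service pairs $(S_i,S_j)$ under the fixed combined budget $\phi(\cdot)$ is convex with a concave Pareto frontier $F$ that is symmetric in $i\leftrightarrow j$ (both jobs share the same $s$). The original schedule achieves $(x_i,R)$ with $R=\int_0^{T_i^*}s(\theta_j^*(t))\,dt$, hence by symmetry $(R,x_i)$ is also achievable, and since $R\le x_j\le x_i$, interpolating along the concave frontier at $S_j=x_j$ gives $F(x_j)\ge x_i-x_j+R$ (this holds with equality in the interpolation bound, so there is no slack). Redirecting the stream $\theta_j^*(\cdot)$ on $(T_i^*,T_j^*]$, which delivers $x_j-R$, to job $i$ then gives job $i$ exactly $x_i$ by $T_j^*$, while job $j$ receives $x_j$ by $T_i^*$; only now is the $(w_i-w_j)(T_j^*-T_i^*)$ bookkeeping legitimate (with the tie case $w_i=w_j$ handled as you describe). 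A minor additional point: your ``adjacent inversion'' justification that no other job completes in $[0,T_j^*]$ is neither needed nor generally true; what matters is only that the swap leaves every other job's allocation, and constraint~\eqref{eq:cst1}, untouched, which it does.
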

The proof of Proposition~\ref{prop:8} is identical to the proof of Theorem 2 in \cite{berg2020hesrpt}.
Let $T_i^*$ denote the completion time for job $i$ ($i=1,2,\cdots,M)$ under the optimal policy.
By Proposition~\ref{prop:8}, we have
\begin{equation}
    T_1^*>T_2^*>\cdots>T_M^*.
\end{equation}

Based on Propositions \ref{prop:7} and \ref{prop:8}, let $\theta_i^j$ denote the constant scheduling rate allocated to job $i$ during the time interval $[T_{j+1}^*, T_j^*)$, where we define $T_{M+1}^* = 0$ for convenience.
We then define the scheduling matrix $\Theta=[\theta_i^j]$ as
\begin{equation}
    \Theta = 
    \begin{bmatrix}
        \theta_1^1 & \theta_1^2 & \cdots & \theta_1^M \\
        \theta_2^1 & \theta_2^2 & \cdots & \theta_2^M \\
        \vdots     & \vdots     & \ddots & \vdots     \\
        \theta_M^1 & \theta_M^2 & \cdots & \theta_M^M \\
    \end{bmatrix}.
\end{equation}
Clearly, finding the optimal scheduling policy is equivalent to determining the optimal matrix $\Theta$.

The resource constraint in \eqref{eq:cst1} translates into the following condition on the scheduling matrix:
\begin{equation}\label{eq:cst4}
    \sum_{i=1}^M \theta_i^j = B, \quad \forall j = 1, 2, \cdots, M.
\end{equation}
In \eqref{eq:cst4}, we use equality rather than inequality because, under the optimal policy, the total available bandwidth should be fully utilized at all times.

Since the optimal scheduler follows the SJF order, we have $\theta_i^j = 0$ for all $i>j$. In other words, the scheduling matrix $\Theta$ is upper triangular:
\begin{equation}
    \Theta = 
    \begin{bmatrix}
        \theta_1^1 & \theta_1^2 & \cdots & \theta_1^M \\
        0          & \theta_2^2 & \cdots & \theta_2^M \\
        \vdots     & \vdots     & \ddots & \vdots     \\
        0          & 0          & \cdots & \theta_M^M \\
    \end{bmatrix}.
\end{equation}

\subsection{Main Algorithm}
The idea of SmartFill is to iteratively compute the optimal scheduling results from the last completed job (job 1) to the first completed job (job $M$).
After each iteration $k$, we record the $c_i$'s  $(1\leq i\leq k)$ for the CDR rule, and use them to compute the optimal scheduling results for iteration $k+1$.

For each $b\leq B$ and $1\leq i\leq k$, we define the CAP function
\begin{equation}
    \text{CAP}_i(b,c_1,c_2,\cdots,c_k)
\end{equation}
as the optimal solution $\theta_i $ for the CAP problem.
The value of CAP function can be computed by the GWF algorithm (Algorithm \ref{alg:CAP}).

The following propositions states that the optimal objective is a linear combination of $x_i$'s.

\begin{proposition}\label{prop:9}
    Under fixed $w_i$'s, $s(\theta)$, and $B$, for different $x_i$'s such that $x_1 \geq x_2 \geq \cdots \geq x_M$, the optimal objective value can be expressed as
    \begin{equation}
        J^* = \sum_{i=1}^M a_i x_i,
    \end{equation}
    where the coefficients satisfy $a_1 < a_2 < \cdots < a_M$, and $a_i$'s are independent of the specific values of $x_i$.
\end{proposition}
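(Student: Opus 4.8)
The plan is to exploit the phase structure from Propositions~\ref{prop:7} and~\ref{prop:8} and then run an induction on the number of jobs that mirrors the outer loop of SmartFill. First I would record two consequences of the phase structure. Writing $d_j = T_j^* - T_{j+1}^*$ for the length of the $j$-th phase (with $T_{M+1}^*=0$) and $W_j=\sum_{i=1}^j w_i$ for the prefix weight sums, the objective rewrites telescopically as $J^* = \sum_{i=1}^M w_i T_i^* = \sum_{j=1}^M W_j d_j$, while the total work done on each job gives the upper-triangular system $x_i = \sum_{j=i}^M s(\theta_i^j)\,d_j$ for $i=1,\dots,M$, whose diagonal entries $s(\theta_i^i)$ are strictly positive. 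Consequently, \emph{if the scheduling matrix $\Theta$ is independent of the sizes}, this triangular system expresses every $d_j$ as a fixed linear combination of $x_i,\dots,x_M$, and substituting into $J^*=\sum_j W_j d_j$ yields $J^*=\sum_i a_i x_i$ with size-independent coefficients. The proposition then reduces to two claims: (a) the optimal coefficients do not depend on the $x_i$, and (b) they are strictly increasing.

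For (a) I would argue by induction on $M$, decomposing the optimal schedule at the first completion time $T_M^*$. During the first phase all $M$ jobs are active at constant rates $\theta_1^M,\dots,\theta_M^M$ (Proposition~\ref{prop:7}), job $M$ finishes exactly at $d_M=x_M/s(\theta_M^M)$ (Proposition~\ref{prop:8}), and the residual instance on jobs $1,\dots,M-1$ has sizes $x_i'=x_i-s(\theta_i^M)d_M$. Two checks make the induction go through: the residuals stay ordered, since $x_i'-x_{i+1}'=(x_i-x_{i+1})+(s(\theta_{i+1}^M)-s(\theta_i^M))d_M\ge 0$; and the restriction of the optimum to $[T_M^*,\infty)$ must be optimal for this $(M{-}1)$-job instance (otherwise replacing the tail strictly lowers $\sum_{i<M}w_iT_i^*$ and hence $J^*$). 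The induction hypothesis then gives $J'^*=\sum_{i=1}^{M-1}a_i x_i'$ with $a_i$ depending only on $w,s,B$. Since $J^*=W_M d_M + J'^*$, substituting $x_i'$ and $d_M$ gives $J^*=\sum_{i=1}^{M-1}a_i x_i + x_M\cdot\frac{W_M-\sum_{i<M}a_i s(\theta_i^M)}{s(\theta_M^M)}$. The first $M-1$ coefficients are unchanged, and because the bracketed ratio involves no $x_i$, minimizing $J^*$ over the feasible first-phase allocation both yields a \emph{size-independent} optimal first phase and defines $a_M$ as the (size-independent) minimum. This simultaneously recovers the CDR Rule: the first-order conditions of that minimization read $a_i\,s'(\theta_i^M)=\mathrm{const}$ across active jobs, i.e. $c_i\propto 1/a_i$.

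For (b) I would read the ordering off those same first-order conditions. For active jobs the relation $a_i\,s'(\theta_i^M)=\mathrm{const}$, combined with the fact that the optimal rates are nondecreasing ($\theta_1^M\le\cdots\le\theta_M^M$ by~\eqref{eq:CAP:cst2}) and that $s'$ is strictly decreasing, forces $a_1\le\cdots\le a_M$; moreover any job that is \emph{inactive} in phase $M$ satisfies $a_i s'(0)\le a_M s'(\theta_M^M)<a_M s'(0)$, giving $a_i<a_M$ outright.

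The hard part will be upgrading these inequalities to the \emph{strict} chain $a_1<\cdots<a_M$: the one borderline configuration is a job sharing a positive rate with the just-completing job, which the first-order conditions alone permit and which would force an equality $a_{M-1}=a_M$. I would rule this out using the strict concavity of $s$ together with the asymmetric role of the completing job — concretely, the strict subadditivity $s(\theta_{M-1})+s(\theta_M)>s(\theta_{M-1}+\theta_M)$ — which is exactly what makes the base-case computation strict ($s(\theta_1)+s(\theta_2)<2s(B)$ gives $a_2>a_1$ when $M=2$). Propagating this strictness through the recursion, rather than the (easy) non-strict monotonicity, is the step I expect to require the most care.
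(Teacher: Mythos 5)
Your argument for the linearity and size-independence of the coefficients is essentially the paper's own proof of Proposition~\ref{prop:9}: both induct on $M$, peel off the first phase at $T_M^*$, write $J^*=\sum_{i<M}a_ix_i+x_M\cdot\bigl(W_M-\sum_{i<M}a_is(\theta_i^M)\bigr)/s(\theta_M^M)$, and observe that the optimization determining the first-phase rates involves no $x_i$, so the optimal $\mu=\theta_M^M$ and hence $a_M$ are size-independent. You supply two details the paper's sketch glosses over---that the residual sizes $x_i'$ remain ordered so the induction hypothesis applies, and that the tail of the optimal schedule must be optimal for the residual instance---both of which are correct and needed. Where you genuinely go beyond the paper is the ordering claim $a_1<\cdots<a_M$: the paper's proof sketch does not address it at all, while you derive the weak chain from the stationarity condition $a_i\,s'(\theta_i^M)=\mathrm{const}$ combined with $\theta_1^M\le\cdots\le\theta_M^M$ and $s'$ strictly decreasing, handle inactive jobs via the KKT inequality, and correctly isolate the one remaining obstruction: ruling out $\theta_{M-1}^M=\theta_M^M$, which would force $a_{M-1}=a_M$. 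That gap is real but closable along the lines you indicate; for instance, for $M=2$ with equal weights one minimizes $f(\theta_1)=\bigl(2s(B)-s(\theta_1)\bigr)/s(B-\theta_1)$ and finds $f'(B/2)=2s'(B/2)\bigl(s(B)-s(B/2)\bigr)/s(B/2)^2>0$, so the optimum is strictly asymmetric in favor of the completing job, and the analogous first-phase computation propagates the strictness through the induction. In short: same route as the paper for what the paper actually proves, and your treatment of the monotonicity of the $a_i$'s is an addition that the paper itself leaves unargued.
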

\begin{proofsketch}
    We prove Proposition \ref{prop:9} by induction.
    On one hand, it is apparent that Proposition \ref{prop:9} holds for $M=1$.

    On the other hand, assume Proposition \ref{prop:9} holds for $M=k$ ($k\geq 1$).
    Then consider the case $M=k+1$.
    Consider the time instance $T_M$ when job $M$ completes.
    After $T_M$, the optimal weighted sum of completion times can be wrote as $J'=\sum_{i=1}^k\alpha_ix_i'$, where $x_i'$ is the remaining size for job $i$ at $t=T_M$.
    And the parameters $c_1,c_2,\cdots,c_k$ can be determined.
    Then for time interval $[0, T_M]$, when the service rate for job $M=k+1$ is $\theta_M^M=\mu$, the service rate of  for job $i$ ($i\leq k$) is $\theta_i^M=\text{CAP}_i(B-\mu,c_1,c_2,\cdots,c_k)$.
    The optimal solution of $\mu$ is given by the following problem:
\begin{equation*}
    \arg\max_\mu \ \  (\sum_{i=1}^{k+1}w_i)\cdot\frac{x_{k+1}}{s(\mu)}-  \big(\sum_{i=1}^{k}a_is(\text{CAP}_i(B-\mu,c_1,c_2,\cdots,c_k))\big)\cdot\frac{x_{k+1}}{s(\mu)}.
\end{equation*}
It is easy to see that the value of $\mu$ is independent from the value of $x_{k+1}$. 
So $J^*$ is also linear with $x_{k+1}$.
This completes the proof.
\qed    
\end{proofsketch}
    
Now we present the SmartFill Algorithm. 
For iteration $k=1$, we set $\theta_1^1=B$, $c_1=1$, and $a_1=\frac{w_1}{s(B)}$.
After iteration $k$, we will have $c_1,c_2,\cdots,c_k$ and $a_1,a_2,\cdots,a_k$.
For iteration $k+1$, we compute $\theta_{k+1}^{k+1}$ by 
\begin{equation}\label{eq:SF_1}
    \theta_{k+1}^{k+1}= \arg\max_\mu \ \  \frac{\sum_{i=1}^{k+1}w_i- \sum_{i=1}^{k}a_is(\text{CAP}_i(B-\mu,c_1,c_2,\cdots,c_k))}{s(\mu)}.
\end{equation}
Then we compute $\theta_{i}^{k+1}$ for $i=1,2,\cdots,k$ by
\begin{equation}\label{eq:SF_2}
    \theta_{k+1}^{i}= \text{CAP}_i(B-\theta_{k+1}^{k+1},c_1,c_2,\cdots,c_k).
\end{equation}
Then we compute $c_{k+1}$ and $a_{k+1}$ by:
\begin{equation}\label{eq:SF_3}
    c_{k+1}= \frac{s'(\theta_{k+1}^{k+1})}{s'(\theta_{k}^{k+1})} c_k,
\end{equation}
and 
\begin{equation}\label{eq:SF_4}
    a_{k+1}=   \frac{\sum_{i=1}^{k+1}w_i- \sum_{i=1}^{k}a_is( \theta_{i}^{k+1})}{s(\theta_{k+1}^{k+1})}.
\end{equation}

\begin{algorithm}
\caption{SmartFill Algorithm to Solve OPT}\label{alg:SF}
\begin{algorithmic}[1] 
\REQUIRE   $s(\theta)$, $B$, $M$, $x_i$'s, and $w_i$'s.
\ENSURE  The optimal solution $\Theta=[\theta_i^j]$ to OPT.
\STATE Set $\theta_i^j = 0$ for all $i>j$.
\STATE  Set $\theta_1^1=B$, $c_1=1$, and $a_1=\frac{w_1}{s(B)}$.
\FOR{$k=1,2,\cdots,M-1$}
\STATE Compute $\theta_{k+1}^{k+1}$ by \eqref{eq:SF_1}.
\STATE Compute $\theta_{i}^{k+1}$ for $i=1,2,\cdots,k$ by \eqref{eq:SF_2}.
\STATE Compute $c_{k+1}$ and $a_{k+1}$ by \eqref{eq:SF_3} and \eqref{eq:SF_4}.
\ENDFOR
\end{algorithmic}
\end{algorithm}

The complete procedure of SmartFill is given in Algorithm \ref{alg:SF}.
Note that in Step 4, we need to solve a maximization problem defined in \eqref{eq:SF_1}.
When $s(x)$ is regular, by GEF, $\text{CAP}_i(B-\mu,c_1,c_2,\cdots,c_k)$ is linear with $\mu$ for each $i$, and we can get a closed-form expression of the solution.
When $s(x)$ is non-regular, we can do this step numerically based on GWF.

\begin{figure}
\centering
\begin{minipage}{.5\textwidth}
  \centering
  \includegraphics[width=.9\linewidth]{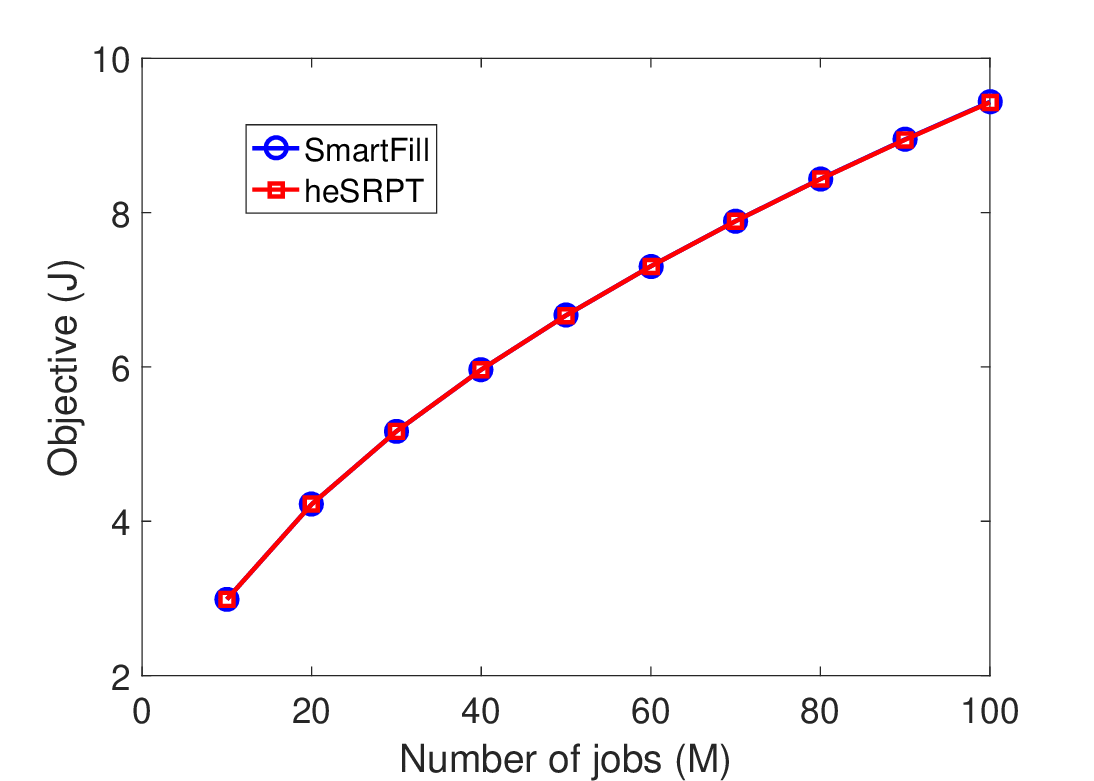}
  \captionof{figure}{Mean slowdown when $s(\theta)=\theta^{0.5}$.}
  \label{fig:3_1}
\end{minipage}%
\begin{minipage}{.5\textwidth}
  \centering
  \includegraphics[width=.9\linewidth]{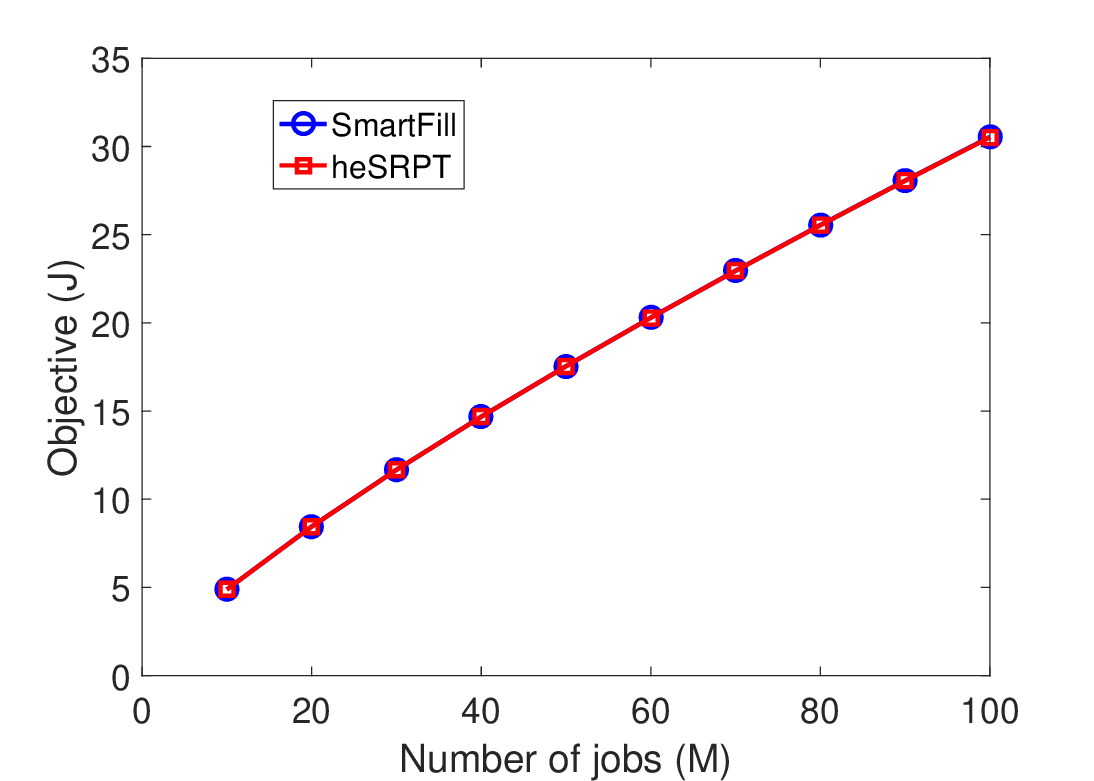}
  \captionof{figure}{Mean slowdown when $s(\theta)=10\theta^{0.8}$.}
  \label{fig:3_2}
\end{minipage}
\end{figure}

\section{Numerical Results}\label{sec:simulation}
In this section we provide some numerical results for SmartFill.
We use heSRPT \cite{berg2020hesrpt} as the performance benchmark.

\subsection{Special Case: $s(\theta)=a\theta^p$}
We first consider the special case $s(\theta)=a\theta^p$, where  heSRPT is optimal \cite{berg2020hesrpt}. 
We consider $M=10,20,\cdots,100$ jobs.
We consider $B=10$.
For each $M$, we assume $x_1=M,x_2=M-1,\cdots,x_M=1$.
We consider the mean slowdown at the objective $J$, i.e.,  $w_i=\frac{1}{x_i}$.
Figs.~\ref{fig:3_1} and \ref{fig:3_2} shows the mean slowdown under different $M$'s when $s(\theta)=\theta^{0.5}$ and $s(\theta)=10\theta^{0.8}$, respectively.
In all cases, SmartFill achieves the same mean slowdown as heSRPT, thereby validating that SmartFill is optimal in this setting.

\subsection{General Concave $s(\theta)$}
We next evaluate the performance of SmartFill under general concave speedup functions.  
In \cite{berg2020hesrpt}, the authors suggested that when $s(\theta)\neq a\theta^p$, the heSRPT policy can still be applied by approximating $s(\theta)$ with a function of the form $a\theta^p$.  
We adopt this approximation-based heSRPT as our performance benchmark.  

As before, we consider systems with $M=10,20,\ldots,100$ jobs and a total resource budget $B=10$.  
For each $M$, the job sizes are set as $x_1=M, x_2=M-1, \ldots, x_M=1$.  
The performance metric is again the mean slowdown, defined by the objective $J$ with $w_i=\tfrac{1}{x_i}$.  

Fig.~\ref{fig:3_3_2} shows the mean slowdowns of SmartFill and heSRPT when $s(\theta)=\log(1+\theta)$.  
Here, heSRPT uses the approximation $s(\theta)=0.79\theta^{0.48}$, shown in Fig.~\ref{fig:3_3_1}.  
SmartFill consistently outperforms heSRPT, achieving a 13.6\% lower slowdown at $M=100$.  

Fig.~\ref{fig:3_4_2} shows results for $s(\theta)=\sqrt{4+\theta}-2$, where heSRPT employs the approximation $s(\theta)=0.26\theta^{0.82}$, as shown in Fig.~\ref{fig:3_4_1}.  
SmartFill again consistently outperforms heSRPT.  
Because this approximation is tighter than in the previous case, the performance gap is smaller: at $M=100$, SmartFill’s slowdown is 6.3\% lower than that of heSRPT.

\begin{figure}
\centering
\begin{minipage}{.5\textwidth}
  \centering
  \includegraphics[width=.9\linewidth]{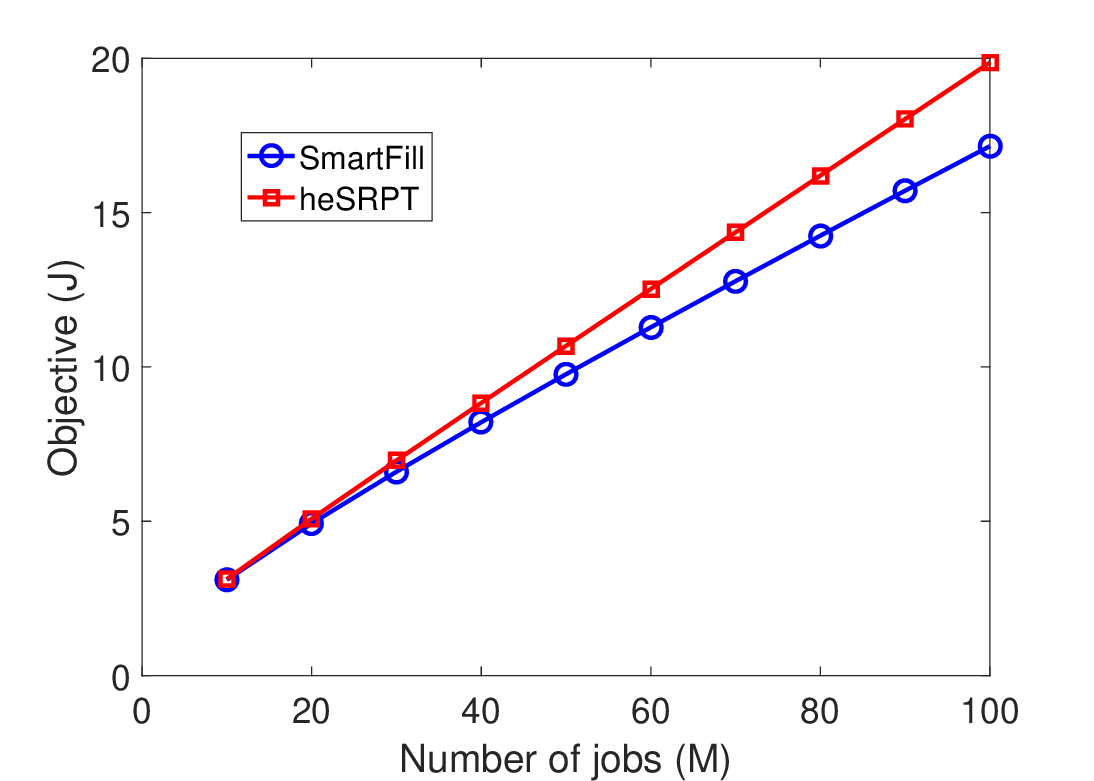}
  \captionof{figure}{Mean slowdown when $s(\theta)=\log(1+\theta)$.}
  \label{fig:3_3_2}
\end{minipage}%
\begin{minipage}{.5\textwidth}
  \centering
  \includegraphics[width=.9\linewidth]{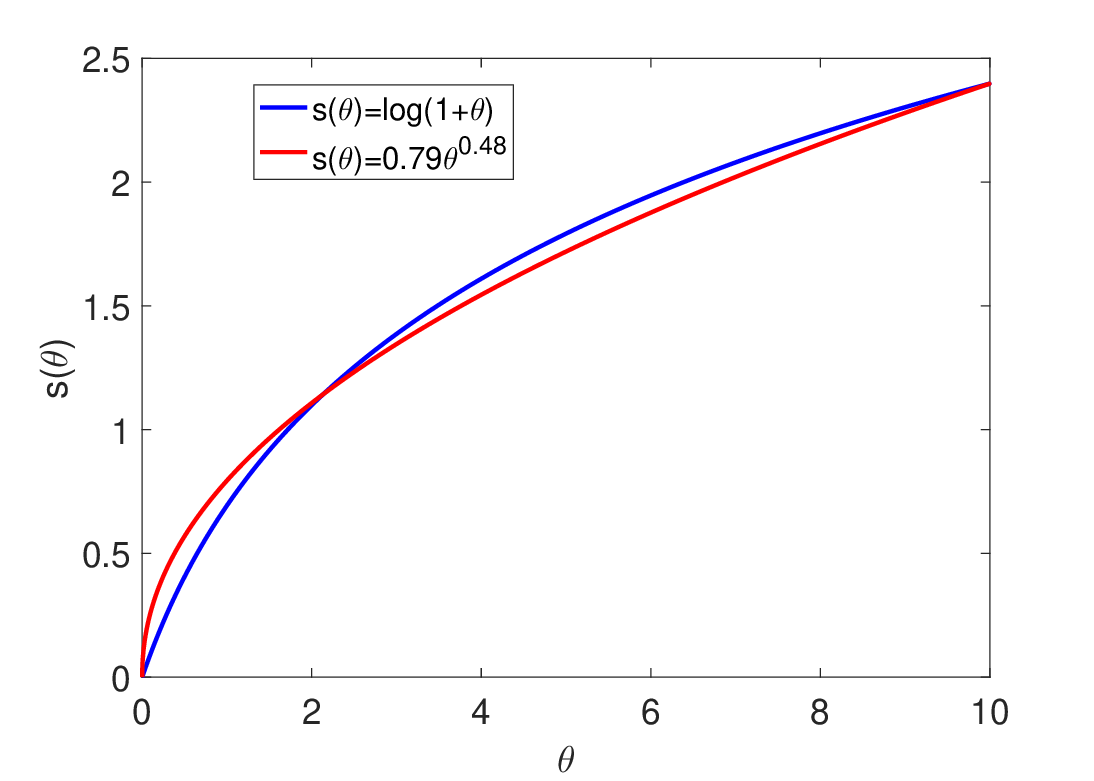}
  \captionof{figure}{Approximation for $s(\theta)=\log(1+\theta)$ used by heSRPT.}
  \label{fig:3_3_1}
\end{minipage}
\end{figure}

\begin{figure}
\centering
\begin{minipage}{.5\textwidth}
  \centering
  \includegraphics[width=.9\linewidth]{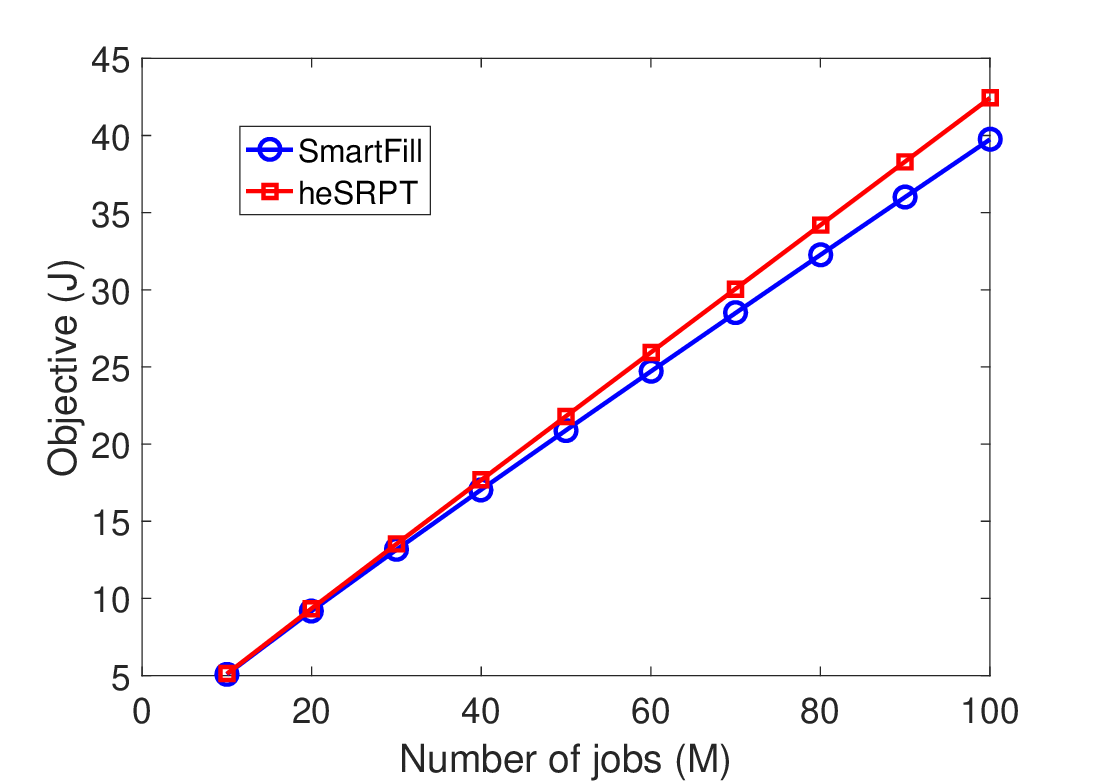}
  \captionof{figure}{Mean slowdown when $s(\theta)=\sqrt{4+\theta}-2$.}
  \label{fig:3_4_2}
\end{minipage}%
\begin{minipage}{.5\textwidth}
  \centering
  \includegraphics[width=.9\linewidth]{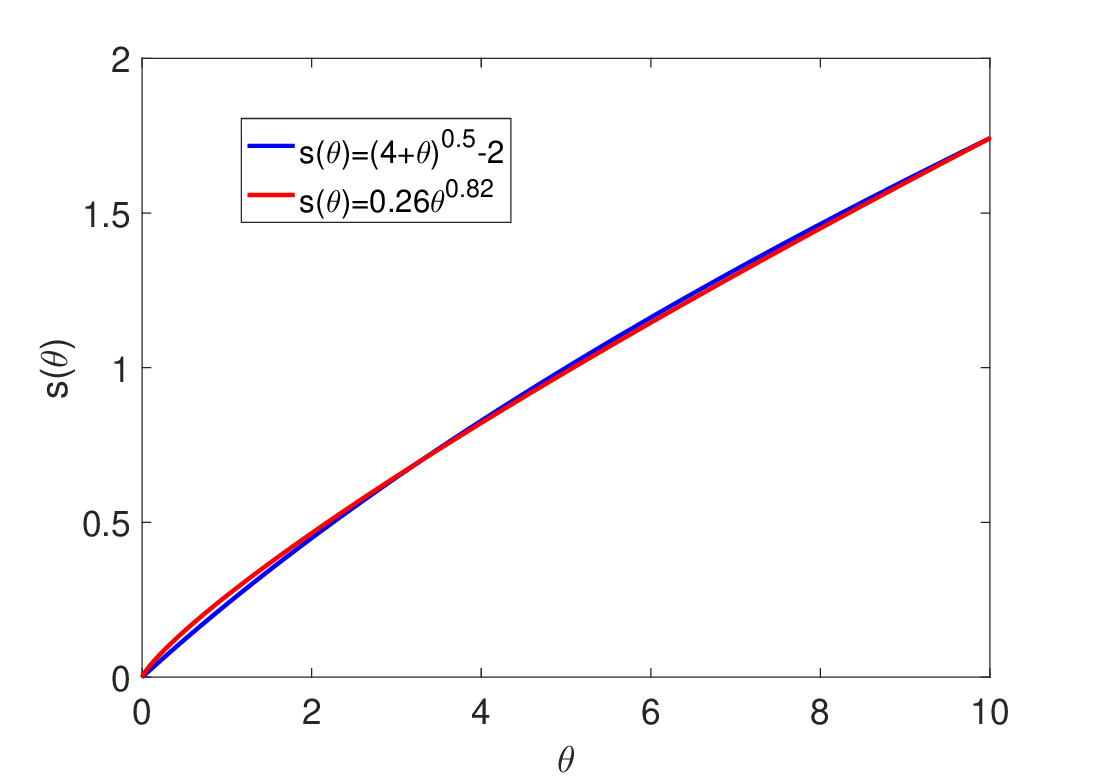}
  \captionof{figure}{Approximation for $s(\theta)=\sqrt{4+\theta}-2$ used by heSRPT.}
  \label{fig:3_4_1}
\end{minipage}
\end{figure}

\section{Discussion: Towards More General Problems}
This paper focuses on the case where all jobs share the same speedup function $s(\theta)$, the system resource budget $B$ is constant over time, and the objective $J$ is a weighted sum of job completion times.  
We presented the CDR Rule and the GWF method, and the SmartFill Algorithm to solve this problem.  
Beyond this specific setting, however, we find that the CDR Rule is applicable to much more general scenarios, including heterogeneous speedup functions across jobs, time-varying resource budgets, and more general system objectives $J$.  

\subsection{General Problem Definition}
Consider a system with $M$ parallelizable jobs, each available at time $t=0$.  
Let $x_i$ denote the size of job $i$ for $i=1,2,\ldots,M$.  
Let $B(t)$ denote the total available resource at time $t$, and let $\theta_i(t)$ denote the allocation to job $i$ at time $t$.  
The allocations must satisfy
\begin{equation}
    \sum_{i=1}^M \theta_i(t) \leq B(t), \quad \forall t > 0.
\end{equation}

Each job $i$ is associated with a time-varying speedup function $s_i(\theta, t)$, which denotes its service rate at time $t$ when allocated bandwidth $\theta$. We assume that for each $i$ and each fixed $t$, the function satisfies: 
\begin{itemize} 
\item $s_i(0, t) = 0$, 
\item $s_i(\theta, t)$ is strictly increasing w.r.t. $\theta$, 
\item $s_i(\theta, t)$ is differentiable w.r.t. $\theta$. 
\item $s_i(\theta, t)$ is strictly concave w.r.t. $\theta$. 
\end{itemize}

Let $s_i'(\theta,t) = \frac{\partial s_i(\theta,t)}{\partial \theta}$.  
We assume that $s_i'(\theta,t)$ is continuous in $\theta$, and that $B(t)$, $s_i(\theta,t)$, $s_i'(\theta,t)$, and $\theta_i(t)$ are all right-continuous in $t$.  

The amount of service received by job $i$ in the interval $[t_1,t_2]$ is
\begin{equation}
    Q_i(t_1,t_2) = \int_{t_1}^{t_2} s_i(\theta_i(t),t)\,dt, \quad \forall t_1<t_2.
\end{equation}
Let $T_i$ denote the completion time of job $i$.  
The system performance metric $J$ is assumed to be a continuous and strictly increasing function of the completion times:
\begin{equation}
    J = f(T_1,T_2,\ldots,T_M).
\end{equation}

The goal is to find the optimal allocation $\theta_i(t)$ for each job $i$ and time $t$ that minimizes $J$, given $B(t)$, $x_i$, $s_i(\theta,t)$, and $f$.  

\subsection{CDR Rule for the General Problem}
Let $\theta^*(t)$ denote an optimal scheduling policy for the general problem.  
We have the following result.  

\begin{theorem}
    (\textbf{CDR Rule for the General Problem})  
    For any pair of jobs $i$ and $j$, there exists a constant $c_{i,j}$ such that
    \begin{equation}
        \frac{s_i'(\theta_i^*(t), t)}{s_j'(\theta_j^*(t), t)} = c_{i,j},
    \end{equation}
    for all $t$ such that $\theta_i^*(t)>0$ and $\theta_j^*(t)>0$.
\end{theorem}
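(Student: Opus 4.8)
The plan is to reproduce the contradiction-based local-exchange argument from the proof of Theorem~\ref{theorem:CDR}, adapting each step so that it survives the move to job-specific and time-varying speedup functions $s_i(\theta,t)$, a time-varying budget $B(t)$, and a general objective $f$. First I would suppose, for contradiction, that the derivative ratio is not constant. Then there exist two times $t_1,t_2$, with $\theta_i^*(t_1),\theta_j^*(t_1),\theta_i^*(t_2),\theta_j^*(t_2)$ all positive, at which the ratio differs; relabeling the two instants if necessary, I would arrange that
\[
  h_{i,j} := s_i'(\theta_i^*(t_1),t_1)\,s_j'(\theta_j^*(t_2),t_2) - s_i'(\theta_i^*(t_2),t_2)\,s_j'(\theta_j^*(t_1),t_1) > 0.
\]
Since the allocations are positive at $t_1,t_2$ we have $t_1,t_2 < \min\{T_i^*,T_j^*\}$, and for $\delta$ small the two windows $[t_1,t_1+\delta]$ and $[t_2,t_2+\delta]$ are disjoint and end before $\min\{T_i^*,T_j^*\}$.

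Next I would define the exchange exactly as in the base case but with the correct local derivatives: on $[t_1,t_1+\delta]$ raise job $i$ by $\Delta_1 = s_i'(\theta_i^*(t_2),t_2)\,\delta$ and lower job $j$ by $\Delta_1$, and on $[t_2,t_2+\delta]$ lower job $i$ by $\Delta_2 = s_i'(\theta_i^*(t_1),t_1)\,\delta$ and raise job $j$ by $\Delta_2$, leaving every other allocation and every other time untouched. Because the exchange is balanced within each window, $\sum_\ell \theta_\ell^{\text{new}}(t) = \sum_\ell \theta_\ell^*(t) \le B(t)$ is preserved, and by right-continuity and positivity of $\theta_i^*,\theta_j^*$ at $t_1,t_2$ all perturbed rates stay in $[0,B(t)]$ for $\delta$ small. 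I would then compute the net change in completed service by a first-order Taylor expansion of $s_i(\cdot,t)$ and $s_j(\cdot,t)$ in the $\theta$-argument at fixed $t$, followed by integration over the length-$\delta$ windows; continuity of $s_i'(\theta_i^*(t),t)$ in $t$ turns each window integral into its value at the left endpoint times $\delta$ plus $o(\delta)$. The contribution of the second-order Taylor term is $O(\delta^3)$. The choice of $\Delta_1,\Delta_2$ makes job $i$'s two contributions, $s_i'(\theta_i^*(t_1),t_1)\Delta_1\delta$ and $s_i'(\theta_i^*(t_2),t_2)\Delta_2\delta$, cancel to leading order, so job $i$'s total service changes by only $o(\delta^2)$, whereas the identical computation for job $j$ gives a net change of $h_{i,j}\,\delta^2 + o(\delta^2)>0$.

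The hard part will be converting these service changes into a strict decrease of a general $f$: unlike the weighted-sum objective of Theorem~\ref{theorem:CDR}, where one simply compares the $\Theta(\delta^2)$ drop in $w_j T_j$ against the $o(\delta^2)$ perturbation of $w_i T_i$, a merely continuous and strictly increasing $f$ gives no control over the relative sensitivities to $T_i$ and $T_j$, and an $o(\delta^2)$ change of $T_i$ of unknown sign could a priori offset the gain in $T_j$. My plan to overcome this is to refine the construction so that job $i$'s cumulative service is preserved \emph{exactly} rather than only to leading order: since job $i$'s net service over the two windows is a continuous function of $\Delta_2$ and $s_i'>0$ everywhere, the intermediate value theorem lets me pick $\Delta_2 = s_i'(\theta_i^*(t_1),t_1)\delta + o(\delta)$ making this net change precisely zero. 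Then $Q_i^{\text{new}}(0,t)=Q_i^*(0,t)$ for all $t$ beyond the later window, so $T_i^{\text{new}}=T_i^*$ exactly; the $o(\delta)$ adjustment perturbs job $j$'s gain only by $o(\delta^2)$, so its net surplus is still $h_{i,j}\delta^2+o(\delta^2)>0$ and $T_j$ strictly decreases (using $s_j(\theta_j^*(T_j^*),T_j^*)>0$ to invert the cumulative-service relation). With all jobs $\ell\ne j$ at their original completion times and only $T_j$ reduced, strict monotonicity of $f$ in its $j$-th argument alone forces $f(T_1^{\text{new}},\dots,T_M^{\text{new}}) < f(T_1^*,\dots,T_M^*)$, contradicting optimality and completing the proof.
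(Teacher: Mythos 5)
Your proposal is correct, and it is in fact more careful than what the paper provides: the paper's entire proof of this theorem is the single sentence that it ``follows directly from the same contradiction argument used in Theorem~\ref{theorem:CDR}.'' You rightly observe that this is not quite true as stated. The base-case argument compares a $\Theta(\delta^2)$ decrease in $w_jT_j$ against an $o(\delta^2)$ perturbation of $w_iT_i$, which works because the objective is a fixed linear combination; for a general $f$ that is only assumed continuous and strictly increasing, nothing controls the relative sensitivities to $T_i$ and $T_j$ (e.g., $f$ could depend on $T_j$ like a cube near the optimum, making the gain $O(\delta^6)$ while an $o(\delta^2)$ increase in $T_i$ enters linearly), so the order-of-magnitude comparison genuinely breaks down. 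Your fix --- using the intermediate value theorem to retune $\Delta_2$ so that job $i$'s cumulative service over the two windows is preserved exactly, hence $T_i^{\text{new}}=T_i^*$, leaving only a strict decrease in $T_j$ to which strict monotonicity of $f$ applies directly --- is the right repair, and the remaining bookkeeping (feasibility under $B(t)$, disjointness of the windows, $t_1,t_2<\min\{T_i^*,T_j^*\}$, and strict increase of $Q_j^*(0,\cdot)$ just before $T_j^*$ so that surplus service translates into strictly earlier completion) is handled correctly. In short, you take the same local-exchange route the paper intends, but you identify and close a real gap in its ``follows directly'' claim; the only minor caution is to state explicitly that the completion time is the first time cumulative service reaches $x_j$, so that the strict decrease of $T_j$ is airtight.
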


The proof follows directly from the same contradiction argument used in Theorem~\ref{theorem:CDR}.  
This result demonstrates that the CDR Rule is a very general structural property for parallel scheduling and can greatly reduce the search space of optimal schedules.  

\subsection{GWF and SmartFill in the General Problem}
Unlike the CDR Rule, the GWF and SmartFill algorithms cannot be directly extended to the general problem.  
Both rely on knowledge of the optimal job completion order.  
For OPT, the order is given by SJF, but for the general problem the optimal order is unclear and may depend on time-varying system conditions.  
Thus, the general problem remains open.  

We believe there are significant research opportunities in extending the CDR Rule (and potentially GWF) to design new algorithms for these more general parallel scheduling problems.

\section{Conclusion}
\label{sec:conclusion}
In this paper, we solved the open problem of scheduling parallel jobs under general concave speedup functions, a fundamental challenge in modern computing systems.  
We established the CDR Rule, which characterizes the structure of any optimal schedule and dramatically reduces the search space.  
Building on this property, we developed the GWF algorithm to compute allocations under fixed derivative ratios, and combined these insights into the SmartFill algorithm.  
Unlike heSRPT, which allocates resources to all active jobs, SmartFill intelligently determines which jobs should receive resources and how much to allocate.  
We showed that for a broad class of \emph{regular} speedup functions, SmartFill produces closed-form optimal solutions, while for non-regular concave functions it provides numerical solutions.  
Numerical evaluations confirmed that SmartFill consistently outperforms heSRPTacross diverse concave speedup functions.  
Finally, we demonstrated that the CDR Rule generalizes to settings with heterogeneous speedup functions, time-varying resources, and broader system objectives, highlighting its potential as a unifying principle in parallel scheduling.

\end{document}